\documentclass{article}

%


\usepackage[round,comma]{natbib}
\bibliographystyle{plainnat}

\usepackage{fullpage}

\usepackage[utf8]{inputenc} 
\usepackage[T1]{fontenc}    
\usepackage{url}            
\usepackage{booktabs}       
\usepackage{amsfonts}       
\usepackage{nicefrac}       
\usepackage{microtype}      
\usepackage[disable]{todonotes}
\usepackage{wrapfig}

\usepackage{subcaption}

\usepackage{algorithm,algorithmic}
\usepackage{verbatim}

\usepackage{amsmath,amsthm,amssymb,amsfonts}

\usepackage{wrapfig}

\usepackage{subcaption}

\usepackage{algorithm,algorithmic}

\newcommand{\HNR}{\textsc{Hit-and-Run}}

\allowdisplaybreaks

\newcommand{\todoy}[2][]{\todo[color=red!20!white,#1]{Yasin: #2}}
\newcommand{\todoa}[2][]{\todo[color=blue!20!white,#1]{Alan: #2}}
\newcommand{\todov}[2][]{\todo[color=blue!20!white,#1]{Victor: #2}}

\usepackage{algorithm}
\usepackage{algorithmic}

\usepackage{nicefrac}



\newtheorem{thm}{Theorem}

\newtheorem{lem}[thm]{Lemma}

\newtheorem{defn}{Definition}

\newtheorem{ass}[thm]{Assumption}




\newcommand{\vol}{\textsc{vol}} 
\newcommand{\view}{\textsc{view}}

\newcommand{\abs}[1]{\left\vert#1\right\vert}


\newcommand{\Real}{\mathbb R}                        

\newcommand{\Prob}[1]{{\mathbb P}\left(#1\right)}    
\newcommand{\one}[1]{\mathbf 1 \left\{#1\right\}}           



\newcommand{\ra}{\rightarrow}

\newcommand{\argmin}{\mathop{\rm argmin}}

\newcommand{\eqdef}{\stackrel{\mbox{\rm\tiny def}}{=}}

\newcommand{\beq}{\begin{equation}}
\newcommand{\eeq}{\end{equation}}
\newcommand{\beqa}{\begin{eqnarray}}
\newcommand{\eeqa}{\end{eqnarray}}
\newcommand{\beqan}{\begin{eqnarray*}}
\newcommand{\eeqan}{\end{eqnarray*}}
\newcommand{\ben}{\begin{eqnarray*}}
\newcommand{\een}{\end{eqnarray*}}
\newcommand{\bea}{\begin{align*}}
\newcommand{\eea}{\end{align*}}

\newcommand{\cC}{{\cal C}}

\newcommand{\cR}{{\cal R}}

\newcommand{\cG}{{\cal G}}

\newcommand{\cH}{{\cal H}}
\renewcommand{\phi}{\varphi}




\newcommand{\bE}{\mathbf{E}}

\title{Hit-and-Run for Sampling and Planning in Non-Convex Spaces}

\author{
Yasin Abbasi-Yadkori\\
Queensland University of Technology\\
\texttt{yasin.abbasiyadkori@qut.edu.au} \\
\and
Peter L. Bartlett \\
UC Berkeley and QUT \\
\texttt{bartlett@eecs.berkeley.edu} \\
\and
Victor Gabillon\\
Queensland University of Technology\\
\texttt{victor.gabillon@qut.edu.au} \\
\and
Alan Malek \\
UC Berkeley \\
\texttt{malek@eecs.berkeley.edu} \\
}

\begin{document}

\maketitle

\begin{abstract}
We propose the \HNR~algorithm for planning and sampling problems in
non-convex spaces. For sampling, we show the first analysis of the
\HNR~algorithm in non-convex spaces and show that it mixes fast as
long as certain smoothness conditions are satisfied. In particular,
our analysis reveals an intriguing connection between fast mixing and
the existence of smooth measure-preserving mappings from a convex
space to the non-convex space. For planning, we show advantages of
\HNR~compared to state-of-the-art planning methods such as Rapidly-Exploring Random Trees. 
\end{abstract}

\section{Introduction}
\label{sec:introduction}

Rapidly-Exploring Random Trees
(RRT)~\citep{LaValle-1998,LaValle-Kuffner-2001} is one of the most
popular planning algorithms, especially when the search space is
high-dimensional and finding the optimal path is computationally
expensive. RRT performs well on many problems where classical dynamic
programming based algorithms, such as A*, perform poorly. RRT is
essentially an exploration algorithm, and in the most basic
implementation, the algorithm even ignores the goal information, which
seems to be a major reason for its success. Planning problems,
especially those in robotics, often feature narrow pathways connecting
large explorable regions; combined with high dimensionality, this
means that finding the optimal path is usually intractable. However,
RRT often provides a \textit{feasible} path quickly.

Although many attempts have been made to improve the basic
algorithm~\citep{AbbasiYadkori-Modayil-Szepesvari-2010,Karaman-Frazzoli-2010,Karaman-Frazzoli-2011},
RRT has proven difficult to improve upon. In fact, given extra
computation, repeatedly running RRT often produces competitive
solutions. In this paper, we show that a simple alternative greatly improves upon 
RRT. We propose using the \HNR~algorithm for
feasible path search. Arguably simpler than RRT, the \HNR~is a rapidly mixing
MCMC sampling algorithm for producing a point uniformly at random from a 
convex space~\citep{Smith-1984}. \todov{not that clear, you mean hnr is 
simpler that RRT? insist maybe more on what still need to be improved 
from RRT or say that you will detail later} Not only \HNR~finds a feasible 
path faster than RRT, it is also more robust with respect to the geometry of the space.

Before giving more details, we define the {\em planning} and {\em sampling}
problems that we consider.
Let $\Sigma$ be a bounded connected subset of $\Real^n$. For points
$a,b\in\Sigma$, we use $[a,b]$ to denote their (one-dimensional)
convex hull. Given a starting point $a_1$ and a goal region
$\cG\subset\Sigma$, the {\em planning problem} is to find a sequence
of points $\{a_1,a_2,\dots,a_\tau\}$ for $\tau\ge1$ such that all
points are in $\Sigma$, $a_\tau$ is in $\cG$, and for
$t=2,\dots,\tau$, $[a_{t-1},a_t]\subset\Sigma$.

The {\em sampling problem} is to generate points uniformly at random
from $\Sigma$. Sampling is often difficult, but Markov Chain Monte
Carlo (MCMC) algorithms have seen empirical and theoretical success
\citep{lovasz2007geometry}. MCMC algorithms, such as \HNR~and \textsc{Ball-Walk} \citep{VEMPALA-2005}, sample a Markov Chain on $\Sigma$ that has a stationary distribution equal to the uniform distribution on $\Sigma$; then, if we run the Markov Chain long enough, the marginal distribution of the sample is guaranteed to come from a distribution exponentially close to the target distribution. Solving the sampling problem yields a solution to the planning problem; one can generate samples and terminate when $a_t$ hits $\cG$. 
\begin{figure}
	\hspace{1cm}\includegraphics[scale=.7,natwidth=610,natheight=642]{./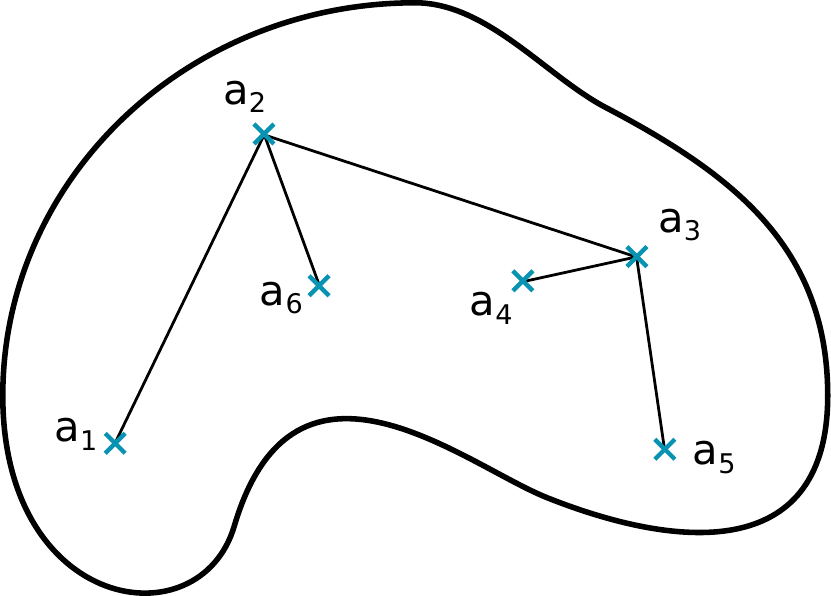}\hspace{2cm}
	\includegraphics[scale=.7,natwidth=610,natheight=642]{./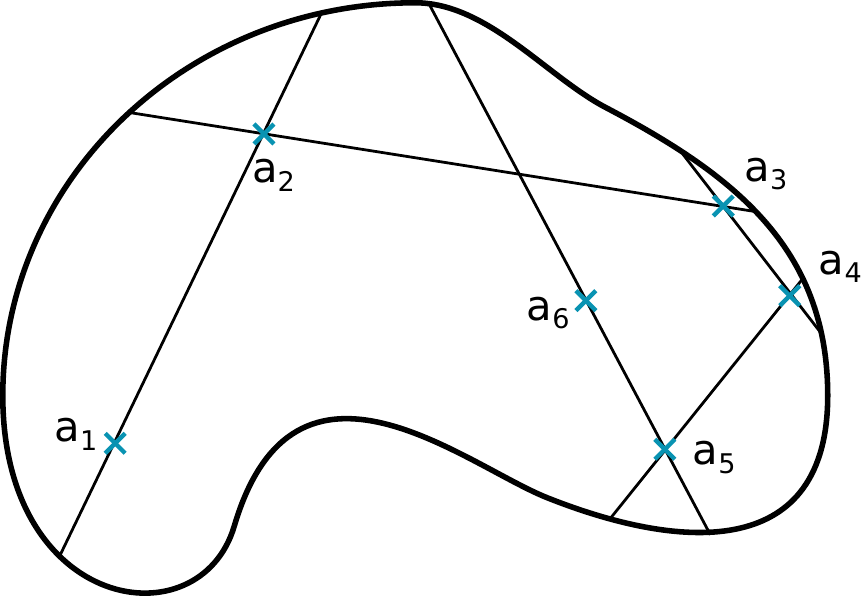}
	\caption{RRT (left) and \HNR~(right)}
	\label{fig:hitandrrt}
\end{figure}

Let us define \HNR~and the RRT algorithms (see also Figure~\ref{fig:hitandrrt} for an illustration). 
 \HNR~defines a Markov chain on $\Sigma$ where the transition dynamics are as follows. A direction is chosen uniformly at random, and $a_{t+1}$ is chosen uniformly from the largest chord contained in $\Sigma$ in this direction passing through $a_t$. This Markov Chain has a uniform stationary distribution on $\Sigma$ \citep{Smith-1984}. As a planning algorithm, this chain continues until it hits the goal region. Let $\tau$ be the stopping time. The solution path is $\{a_1,a_2,\dots,a_\tau\}$. 

On the other hand, the RRT algorithm iteratively builds a tree $T$ with $a_1$ as a root and nodes labeled as $a^n\in\Sigma$ and edges $\{a^m,a^n\}$ that satisfy $[a^m,a^n]\subseteq\Sigma$. To add a point to the tree, $a^r$ is uniformly sampled from $\Sigma$ and its nearest neighbor $a^n\in T$ is computed. If $[a^n,a^r]\subset\Sigma$, then node $a^r$ and edge $[a^n,a^r]$ are added to $T$. Otherwise, we search for the point $a^e\in[a^n,a^r]$ farthest from $a^n$ such that $[a^n,a^e]\subseteq \Sigma$. Then $a^e$ and $[a^n,a^e]$ are added to the tree. This process is continued until we add an edge terminating in $\cG$ and the sequence of points on that branch is returned as the solution path. In the presence of dynamic constraints, a different version of RRT that makes only small local steps is used. These versions will be discussed in the experiments section.   

There are two main contributions on this paper. First, we analyze the
\HNR~algorithm in a non-convex space and show that the mixing time is
polynomial in dimensionality as long as certain smoothness conditions
are satisfied. The mixing time of \HNR~for convex spaces is known to
be polynomial~\citep{Lovasz-1999}. However, to accommodate planning
problems, we focus on non-convex spaces. Our analysis reveals an
intriguing connection between fast mixing and the existence of smooth
measure-preserving mappings. The only existing analysis of random walk
algorithms in non-convex spaces is due to
\cite{Chandrasekaran-Dadush-Vempala-2010} who analyzed
\textsc{Ball-Walk} in star-shaped bodies.\footnote{We say $S$ is
star-shaped if the kernel of $S$, define by $K_S = \{x\in S : \forall
y\in S\,\, [x,y]\subset S \}$, is nonempty.} Second, we propose \HNR~for 
planning problems as an alternative to RRT and show that it 
finds a feasible path quickly. From the mixing rate, we obtain 
a bound on the expected length of the solution path in the planning problem. 
Such performance guarantees are not available for RRT.

The current proof techniques in the analysis of \HNR~heavily rely on the convexity of the space. It turns out that non-convexity is specially troubling when points are close to the boundary. We overcome these difficulties as follows. First, \citet{Lovasz-Vempala-2006-corner} show a tight isoperimetic inequality in terms of average distances instead of minimum distances. This enables us to ignore points that are sufficiently close to the boundary. Next we show that as long as points are sufficiently far from the boundary, the cross-ratio distances in the convex and non-convex spaces are closely related. Finally we show that, given a curvature assumption, if two points are close geometrically and are sufficiently far from the boundary, then their proposal distributions must be close as well. \todov{what is a proposal distribution?}
\todov{theoretical results for RRT in the litterature?}

\HNR~has a number of advantages compared to RRT; it does not require
random points sampled from the space (which is itself a hard problem),
and it is guaranteed to reach the goal region with high probability in
a polynomial number of rounds. In contrast, there are cases where RRT
growth can be very slow (see the experiments sections for a
discussion). Moreover, \HNR~provides \textit{safer} solutions, as its
paths are more likely to stay away from the boundary. In contrast, a
common issue with RRT solutions is that they tend to be close to
the boundary. Because of this, further post-processing steps are needed to
smooth the path.

\todov{insist more on the fact that you do not change hitand run or propose a new version just show that the original algo is the solution and you provide the analysis which is cool}


\subsection{Notation}

For a set $K$, we will denote the $n$-dimensional volume by $\vol(K)$, the $(n-1)$-dimensional surface volume by $S_K = \vol_{n-1}(\partial K)$, and the boundary by $\partial K$. The diameter of $K$ is $D_K = \max_{x,x'\in K} \abs{x-x'}$, where
$\abs{\cdot}$ will be used for absolute value and Euclidean norm, and the distance between sets $K_1$ and $K_2$ is defined as
$d(K_1,K_2)=\min_{x\in K_1, y\in K_2} \abs{x-y}$. Similarly,
$d(x,K)=d(\{x\},K)$. For a set $K$, we use $K^\epsilon$ to denote $\{x\in K\, :\, d(x,\partial K)\ge \epsilon\}$. Finally, for distributions $P$ and $Q$, we use $d_{tv}(P,Q)$ to denote the total variation distance between $P$ and $Q$. 

We will also need some geometric quantities. We will denote lines
(i.e., 1-dimensional affine spaces) by $\ell$. For $x,y\in K$, we
denote their convex hull, that is, the line segment between them,
by $[x,y]$ and $\ell(x,y)$ the line that passes through $x$ and $y$ (which contains $[x,y]$). We also write $[x_1,\ldots, x_k]$ to denote that $x_1,\ldots, x_k$ are collinear.

We also use $\ell_K(x,y)$ to denote the longest connected chord
through $x\in K$ and $y\in K$ contained in $K$ and $|\ell_K(x,y)|$ its
length. We use $a(x,y)$ and $b(x,y)$ to denote the endpoints of
$\ell_K(x,y)$ that are closer to $x$ and $y$, respectively, so that
$\ell_K(x,y) = [a(x,y), b(x,y)] = [b(y,x), a(y,x)]$. The
Euclidean ball of unit radius centered at the origin,
$B_n(0,1)\subset\Real^n$, has volume $\pi_n$. We use $x_{1:m}$ to denote the sequence $x_1,\dots,x_m$. Finally we use $a \wedge
b$ to denote $\min(a,b)$.

\section{Sampling from Non-Convex Spaces}

Most of the known results for the sampling times of the \HNR~exist for
convex sets only. We will think of $\Sigma$ as the image of some
convex set $\Omega$ under a measure preserving, bilipschitz function
$g$. The goal is to understand the relevant geometric quantities of
$\Sigma$ through properties of $g$ and geometric properties of
$\Omega$. We emphasize that the  existence of the map $g$ and its
properties are necessary for the analysis, but the actual algorithm
does not need to know $g$. We formalize this assumption below as well
as describe how we interact with $\Sigma$ and present a few more
technical assumptions required for our analysis. We then present our
main result, and follow that with some conductance results before
moving on to the proof of the theorem in the next section.

\begin{ass}[Oracle Access]
Given a point $u$ and a line $\ell$ that passes through $u$, the oracle returns whether $u\in\Sigma$, and, if so, the largest connected interval in $\ell\cap\Sigma$ containing $u$.
\end{ass}
\begin{ass}[Bilipschitz Measure-Preserving Embeddings]
\label{ass:mapping}
There exist a convex set $\Omega\subset\Real^n$ and a bilipschitz, measure-preserving map $g$ such that  $\Sigma$ is the image of $\Omega$ under $g$. That is, there exists a function $g$ with $\abs{D_g(x)}=1$ (i.e. the Jacobian has unit determinant) with constants $L_\Sigma$ and $L_\Omega$ such that, for any $x,y\in\Omega$, 
\[
\frac{1}{L_\Omega} \abs{x-y} \le \abs{g(x)- g(y)} \le L_\Sigma \abs{x-y}.
\]
In words, $g$ is measure-preserving, $g$ is $L_\Sigma$-Lipschitz, and
$g^{-1}$ is $L_\Omega$-Lipschitz.
\end{ass}
As an example, \cite{Fonseca-Parry-1992} shows that for any star-shaped space, a smooth measure-preserving embedding exists. 
One interesting consequence of Assumption~\ref{ass:mapping} is that
because the mapping is measure-preserving, there must exist a pair
$x,y\in\Omega$ such that $\abs{g(x)- g(y)} \ge \abs{x-y}$. Otherwise,
$\int_\Omega g\le 1$, a contradiction. Similarly, there must exist a pair $u,v\in\Sigma$ such that $\abs{g^{-1}(u)- g^{-1}(v)} \ge \abs{u-v}$. Thus,  
\beq
\label{eq:LOmegaLower}
L_\Omega,L_\Sigma \ge 1 \;. 
\eeq
To simplify the analysis, we will assume that $\Omega$ is a ball with radius $r$. In what follows, we use $x,y,z$ to denote points in $\Omega$, and $u,v,w$ to denote points in $\Sigma$.
We will also assume that $\Sigma$ has no sharp corners and has a smooth boundary:
\begin{ass}[Low Curvature]
\label{ass:curvature}
For any two dimensional plane $\cH\subset \Real^n$, let $\kappa_\cH$ be the curvature of $\partial\Sigma\cap \cH$ and $\cR_\cH$ be the perimeter of $\partial\Sigma\cap \cH$. We assume that $\Sigma$ has low curvature, i.e. that $\kappa=\sup_\cH \kappa_\cH \cR_\cH$ is finite.  
\end{ass}
Assumption~\ref{ass:mapping} does not imply low curvature, as there exist smooth measure-preserving mappings from the unit ball to a cube~\citep{Griepentrog-Hoppner-Kaiser-Rehberg-2008}. 
\begin{ass}
We assume that the volume of $\Sigma$ is equal to one. We also assume that $\Sigma$ contains a Euclidean ball of radius one.  
\end{ass}
Note that the unit ball has volume less than 1 for $n>12$, so for small dimensional problems, we will need to relax this assumption. 

We motivate the forthcoming technical machinery by demonstrating what it can accomplish. The following theorem is the main result of the paper, and the proof makes up most of Section~\ref{sec:analysis}.
\begin{thm}
\label{thm:sampling:main}
Consider the \HNR~algorithm. Let $\sigma_0$ be the distribution of the initial point given to \HNR, $\sigma_t$ be the distribution after $t$ steps of \HNR, and $\sigma$ be the stationary distribution (which is uniform). Let $M=\sup_A \sigma_0(A)/\sigma(A)$. Let $\epsilon$ be a positive scalar. After 
\[
t\ge C' n^6 \log \frac{M}{\epsilon}
\]
steps, we have $d_{tv}(\sigma_t, \sigma) \le \epsilon$. Here $C'$ is a low order polynomial of $L_\Omega,L_\Sigma,\kappa$. 
\end{thm}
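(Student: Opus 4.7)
The plan is to follow the standard conductance-based route to mixing-time bounds, but adapted to the non-convex geometry through the map $g$ and the curvature assumption. The starting point is the classical reduction (in the style of Lovász--Simonovits): if we establish $s$-conductance $\phi_s \ge c/n^3$ for some suitable $s = \epsilon/(2M)$, then the standard argument yields $d_{tv}(\sigma_t,\sigma) \le \epsilon$ after $O(\phi_s^{-2}\log(M/\epsilon)) = O(n^6 \log(M/\epsilon))$ steps. So the real work is lower-bounding the $s$-conductance, for which we need two ingredients: (i) a one-step overlap inequality of the form ``$|u-v|$ small $\Rightarrow d_{tv}(P_u,P_v)$ small'' for points not too close to $\partial\Sigma$, and (ii) an isoperimetric inequality for $\Sigma$.

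For the isoperimetric inequality, I would pull the geometry back to $\Omega$ via $g$. Given a partition $\Sigma = S_1 \sqcup S_2 \sqcup S_3$, the preimage is a partition of $\Omega$ of the same volumes (since $g$ is measure-preserving). Apply the Lovász--Vempala average-distance isoperimetric inequality on the convex set $\Omega$ (where it is known), and then transfer back: since $g$ is bilipschitz with constants $L_\Omega, L_\Sigma$, Euclidean distances in $\Omega$ and $\Sigma$ are comparable up to these constants. Using the average-distance version is crucial because we are allowed to essentially ignore a thin shell $\Sigma\setminus\Sigma^\delta$ of mass $O(s)$ near the boundary; this bypasses the non-convexity trouble that occurs near $\partial\Sigma$ (exactly the point flagged in the introduction).

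For the overlap inequality, take $u,v \in \Sigma^\delta$ with $|u-v|$ of order $1/(n\cdot\text{poly})$. The Hit-and-Run proposal $P_u$ picks a direction $\theta$ uniformly on $S^{n-1}$ and then a point uniformly on the chord $\ell_\Sigma(u,u+\theta)$. I would construct a coupling: for each direction $\theta$, compare the chords through $u$ and $v$. Far from the boundary, convexity of $\Omega$ and the bilipschitz map give that the chords' endpoints on $\partial\Sigma$ in direction $\theta$ are close; the low-curvature assumption (Assumption~\ref{ass:curvature}) controls how quickly these endpoints move as the base point shifts by $v-u$, so $|\ell_\Sigma(u,\theta)\triangle \ell_\Sigma(v,\theta)|$ is small relative to $|\ell_\Sigma(u,\theta)|$. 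Averaging over $\theta$ and integrating the densities along chords yields $d_{tv}(P_u,P_v) \le 1 - \Omega(1)$, which is the standard overlap conclusion.

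Finally, combine: for any measurable $S\subset\Sigma$ with $\sigma(S)\in [s,1-s]$, let $S_1 = \{u\in S : P_u(S^c)< 1/8\}$, $S_2 = \{u\in S^c : P_u(S)<1/8\}$, and $S_3$ the rest. If $\sigma(S_1)\le\sigma(S)/2$ or $\sigma(S_2)\le\sigma(S^c)/2$, the conductance ratio is bounded directly; otherwise, the overlap inequality forces $d(S_1,S_2)\ge \Omega(1/(n\cdot\text{poly}))$, and the isoperimetric inequality lower-bounds $\sigma(S_3)\ge \Omega(1/n^{3/2})$ times the smaller of $\sigma(S_1),\sigma(S_2)$, after absorbing the boundary shell into $s$. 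This delivers $\phi_s \gtrsim 1/n^3$ (with constants depending polynomially on $L_\Omega,L_\Sigma,\kappa$), and plugging into the conductance-to-mixing-time bound finishes the proof.

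The hardest step is (iii) the overlap bound in the non-convex setting: chords through nearby points can behave wildly when the boundary bends sharply, so the curvature assumption must be invoked quantitatively to argue that for $u,v\in\Sigma^\delta$ the chord endpoints move continuously with Lipschitz constant controlled by $\kappa$ and $L_\Omega,L_\Sigma$. All the parameters $L_\Omega, L_\Sigma, \kappa$ enter only polynomially, which is what permits them to be hidden inside the constant $C'$.
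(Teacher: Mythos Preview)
Your high-level plan matches the paper's: lower-bound conductance via isoperimetry plus a one-step overlap inequality, then invoke Lov\'asz--Simonovits (Lemma~\ref{lem:convergence}). Where you diverge is in how each piece is executed, and the overlap step as you describe it has a real gap. The paper does \emph{not} couple $P_u$ and $P_v$ by direction. It compares the densities $f_u(x)$ and $f_v(x)$ pointwise on a ``good'' set, after first excising $\{x:\ x\in\view(u),\ x\notin\view(v)\}$ via a separate overlapping-views lemma (Lemma~\ref{lem:view}); this is where the curvature assumption enters, through a delicate planar argument bounding how much of a chord through $u$ can be hidden from a nearby $v$. The density comparison itself (Lemma~\ref{lem:total-var}) then needs $\ell_\Sigma(x,v)/\ell_\Sigma(x,u)$ controlled, which the paper does via Menelaus' theorem inside the convex hull of the six relevant chord endpoints; this step loses a factor $D_\Sigma/\epsilon$, so the overlap obtained is only $1-\epsilon/(8e^4 D_\Sigma)=1-O(1/n)$, not the $1-\Omega(1)$ you assert (that is the convex result and is not established here). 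This weaker overlap is exactly why the threshold $\delta$ in Lemma~\ref{lem:s-cond} scales like $1/n$.

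The second divergence is in the isoperimetry. You transport only Euclidean distances through $g$, but the paper's isoperimetric inequality (Theorem~\ref{thm:iso-ineq}) and its total-variation lemma are both phrased in the \emph{cross-ratio} distance $d_\Omega$ (respectively $d_\Sigma$); indeed Lemma~\ref{lem:total-var} requires both $d_\Sigma(u,v)$ and $|u-v|$ to be small, so when overlap fails you learn only that one of them is large, and the conductance argument must control cross-ratio separation, not just Euclidean. The paper therefore develops the comparison $\widetilde d_\Sigma(g(x_1),g(x_2)) \le 4L_\Sigma^2 L_\Omega^2 R_\epsilon(1+2R_\epsilon)\, d_\Omega(x_1,x_2)$ for $x_1,x_2\in\Omega^\epsilon$ (Lemma~\ref{lem:lipschitz}, built on the technical Lemma~\ref{eq:ineq}); this cross-ratio transfer is the main new machinery of the paper and your outline omits it entirely. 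Handling the boundary shell via $s$-conductance rather than the average-$h$ form of Theorem~\ref{thm:iso-ineq} is a legitimate bookkeeping alternative, but it does not remove the need for the cross-ratio analysis.
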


\section{Analysis}
\label{sec:analysis}

This section proves Theorem~\ref{thm:sampling:main}. We begin by stating  a number of useful geometrical results, which allow us to 
prove the two main components: an isoperimetric inequality in Section~\ref{sec:isoperimetric} and a total variation inequality in Section~\ref{sec:total_variation}. We then combine everything in Section~\ref{sec:everything}.

\subsection{Fast Mixing Markov Chains}

We rely on the notion of conductance \todov{reference?} as our main technical tool. This section recalls the relevant results.

We say that points $u,v\in\Sigma$ see each other if $[u,v]\subseteq\Sigma$. We use $\view(u)$ to denote all points in $\Sigma$ visible from $u$. 
Let $\ell_{\Sigma}(u,v)$ denote the chord through $u$ and $v$ inside $\Sigma$ and $\abs{\ell_{\Sigma}(u,v)}$ its length. Let $P_u(A)$ be the probability of being in set $A\subset \Sigma$ after one step of \HNR~from $u$ and $f_u$ its density function. 
By an argument similar to the argument in Lemma~3 of \cite{Lovasz-1999}, we can show that
\beq
\label{eq:pdf}
f_u(v) = 2\frac{ \one{v\in \view(u)}}{n \pi_n \abs{\ell_{\Sigma}(u,v)}\cdot\abs{u-v}^{n-1}} \;.
\eeq 
The conductance of the Markov process is defined as 
\[
\Phi = \inf_{A \subset \Sigma} \frac{\int_A P_u(\Sigma\setminus A) d u}{\min (\vol(A), \vol(\Sigma\setminus A))} \;.
\]
We begin with a useful conductance result that applies to general Markov Chains.
\begin{lem}[Corollary~1.5 of \citet{Lovasz-Simonovits-1993}]
\label{lem:convergence}
Let $M=\sup_A \sigma_0(A)/\sigma(A)$. Then for every $A\subset \Sigma$, 
\[
\abs{\sigma_t(A)-\sigma(A)} \le \sqrt{M} \left( 1 - \frac{\Phi^2}{2} \right)^t \;.
\] 
\end{lem}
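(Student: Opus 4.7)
The plan is to follow the classical Lovász–Simonovits approach, which does not rely on spectral theory but on a pointwise contraction of a monotone rearrangement of the density of $\sigma_t$ with respect to $\sigma$. The central object is a one-dimensional function $h_t:[0,1]\to\Real$ that encodes the worst-case discrepancy of $\sigma_t$ on sets of a given $\sigma$-measure, and that one shows contracts uniformly by a factor $1-\Phi^2/2$ at each step.

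First I would introduce the density $p_t = d\sigma_t/d\sigma$, which exists because \HNR\ admits a density (equation~\eqref{eq:pdf}) and $\sigma$ is uniform. I would then define
\[
h_t(x) \;=\; \sup\bigl\{\sigma_t(A) - \sigma_t(\Sigma)\cdot x \;:\; \sigma(A) = x\bigr\},
\]
or equivalently work with the concave envelope $H_t$ of $y\mapsto\sigma_t(A)$ over all $A$ with $\sigma(A)\le y$. The key observation is that $H_t$ is concave with $H_t(0)=H_t(1)=0$, so it is controlled by its maximum.

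Next I would establish the one-step smoothing inequality: for every $x\in[0,1]$,
\[
H_{t+1}(x)\;\le\;\tfrac{1}{2}\bigl(H_t(x-2\Phi\, x(1-x)) + H_t(x+2\Phi\, x(1-x))\bigr).
\]
This is the heart of the argument and is where conductance enters. Fixing an optimal $A$ with $\sigma(A)=x$, one writes
\[
\sigma_{t+1}(A) \;=\; \sigma_t(A) - \!\!\int_A\! P_u(\Sigma\setminus A)\,d\sigma_t(u) + \!\!\int_{\Sigma\setminus A}\! P_u(A)\,d\sigma_t(u),
\]
uses reversibility of \HNR\ with respect to $\sigma$ to relate the two cross-flow integrals, and invokes the definition of $\Phi$ to shift the "mass" by an amount at least $\Phi\min(\sigma(A),\sigma(\Sigma\setminus A))=\Phi x(1-x)$ when measured against $\sigma$. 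Concavity of $H_t$ then yields the two-point averaging bound. This step I expect to be the main obstacle, since it requires carefully decomposing $A$ into a "good" part (where one step transports substantial mass into $\Sigma\setminus A$) and a "bad" part whose measure is controlled by conductance.

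Finally I would iterate the smoothing inequality. Writing $H_t(x)\le C_t\sqrt{x(1-x)}$ for some $C_t$, a short computation using $\sqrt{(x\pm\Phi x(1-x))(1-x\mp\Phi x(1-x))}\le \sqrt{x(1-x)}\sqrt{1-\Phi^2}$ gives $C_{t+1}\le C_t\sqrt{1-\Phi^2}\le C_t(1-\Phi^2/2)$. The initial constant is bounded by $C_0\le \sqrt{M}$ because $|\sigma_0(A)-\sigma(A)|\le \sqrt{M}\sqrt{\sigma(A)(1-\sigma(A))}$ by Cauchy–Schwarz applied to $(p_0-1)\mathbf 1_A$ in $L^2(\sigma)$ with $\|p_0-1\|_{L^2}^2\le M-1\le M$. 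Combining yields $|\sigma_t(A)-\sigma(A)|\le H_t(\sigma(A))\le \sqrt{M}(1-\Phi^2/2)^t$, which is the stated bound.
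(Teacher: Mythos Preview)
The paper does not prove this lemma at all: it is quoted verbatim as Corollary~1.5 of \citet{Lovasz-Simonovits-1993} and used as a black box in the proof of Theorem~\ref{thm:sampling:main}. Your sketch is the standard Lov\'asz--Simonovits argument underlying that cited corollary, and the outline is correct, including the centering trick $\int_A(p_0-1)\,d\sigma=\int(p_0-1)(\mathbf 1_A-\sigma(A))\,d\sigma$ that yields the $\sqrt{x(1-x)}$ factor in the initial bound. So there is nothing to compare: you have supplied a proof where the paper supplies a citation.
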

Proving a lower bound on the conductance is therefore a key step in the mixing time analysis. Previous literature has shown such lower bounds for  convex spaces. Our objective in the following is to obtain such bounds for more general non-convex spaces that satisfy bilipschtiz measure-preserving embedding and low curvature assumptions.
 \todoa{Maybe Add a comment about how this part of the paper applies for general Markov chains even though they are studying random walks on a convex body.}

As in previous literature, we shall find that the following \emph{cross-ratio distance} is very useful in deriving an \textit{isoperimetric inequality} and a \textit{total variation inequality}.
\begin{defn}
Let $[a,u,v,b]$ be collinear and inside $\Sigma$, such that $a,b\in \partial \Sigma$. Define 
\[
d_\Sigma(u,v) = \frac{\abs{a-b}\abs{u-v}}{\abs{a-u}\abs{v-b}} \;.
\] 
\end{defn}
It is easy to see that $d_\Sigma(u,v) \ge 4 \abs{u-v}/D_\Sigma$. We define the following distance measure for non-convex spaces.
\begin{defn}
  A set $\Sigma$ will be called $\tau$-best if, for any $u,v\in\Sigma$, there exist points $z_1,\ldots,z_{\tau-1}$ such that $[u,z_1], [z_{\tau-1},v]$, and $[z_i,z_{i+1}]$ for $i=1,\ldots,\tau-2$ are all in $\Sigma$; i.e., any two points in $\Sigma$ can be connected by $\tau$ line segments that are all inside $\Sigma$. We define the distance
\begin{align*}
\widetilde d_\Sigma(u,v) &= \inf_{z_{1:\tau-1}\in \Sigma} \big( d_\Sigma(u,z_1) + d_\Sigma(z_1,z_2) + \dots + d_\Sigma(z_{\tau-1}, v) \big) \;,
\end{align*}
and, by extension, the distance between two subsets
$\Sigma_1,\Sigma_2\subset \Sigma$ as $\widetilde
d_\Sigma(\Sigma_1,\Sigma_2) = \inf_{u\in \Sigma_1, v\in \Sigma_2} \widetilde d_\Sigma(u,v)$. 
\end{defn}
The analysis of the conductance  is often derived via an \textit{isoperimetric inequality}. 
\begin{thm}[Theorem 4.5 of \cite{VEMPALA-2005}]
\label{thm:iso-ineq}
Let $\Omega$ be a convex body in $\Real^n$. Let $h:\Omega\ra\Real^+$ be an arbitrary function. Let $(\Omega_1,\Omega_2,\Omega_3)$ be any partition of $\Omega$ into measurable sets. Suppose that for any pair of points $x\in \Omega_1$ and $y\in \Omega_2$ and any point $z$ on the chord of $\Omega$ through $x$ and $y$, $h(z) \le (1/3) \min(1,d_\Omega(x,y))$. Then 
\[
\vol(\Omega_3) \ge \bE_\Omega(h) \min(\vol(\Omega_1), \vol(\Omega_2)) \,,
\]
where the expectation is defined with respect to the uniform distribution on $\Omega$.
\end{thm}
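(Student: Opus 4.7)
The plan is to prove this isoperimetric inequality via the Localization Lemma of Lovász and Simonovits, which reduces integral inequalities over a convex body to one-dimensional inequalities along line segments weighted by log-concave densities. First I would reformulate the desired conclusion in product form: assuming without loss of generality that $\vol(\Omega_1)\le\vol(\Omega_2)$, we wish to show
\[
\vol(\Omega)\cdot\vol(\Omega_3) \ge \left(\int_\Omega h(z)\,dz\right)\cdot \vol(\Omega_1).
\]
Assuming for contradiction that this fails, the localization lemma yields a segment $[p,q]\subset\Omega$ together with a log-concave density $\alpha$ on it such that the failure persists along the needle: writing $I_j = \Omega_j \cap [p,q]$ and $\pi(A) = \int_A \alpha$, we would have
\[
\pi([p,q])\cdot\pi(I_3) < \left(\int_{[p,q]} h(z)\alpha(z)\,dz\right)\cdot \pi(I_1).
\]
The chord condition carries over to every sub-chord of the needle, and the cross-ratio distance takes the explicit form $d_\Omega(x,y) = |p-q||x-y|/(|p-x||y-q|)$ for $x,y\in [p,q]$, turning the problem into a concrete 1D estimate.

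Next I would prove the 1D statement, deriving a contradiction from the displayed strict inequality. Parametrize $[p,q]$ as $[0,1]$ and, by a rearrangement or swapping argument, reduce to the case where $I_1$ lies to the left of $I_2$, so every chord between them passes through a central portion of $I_3$. The hypothesis forces $h(z)\le(1/3)\min(1,d_\Omega(x,y))$ whenever $z\in[x,y]$ with $x\in I_1, y\in I_2$, so $h$ is small wherever a short chord can sandwich it. Using log-concavity of $\alpha$ together with the three-point inequality $\alpha((1-\lambda)a+\lambda c)\ge \alpha(a)^{1-\lambda}\alpha(c)^{\lambda}$, one obtains pointwise lower bounds on $\alpha(z)$ for $z$ in the separating interval in terms of $\alpha$ values at points in $I_1\cup I_2$. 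Integrating the upper bound on $h$ against $\alpha$ and rearranging then produces the desired $\ge$ inequality on the needle, contradicting the strict $<$ obtained from localization.

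The main obstacle will be controlling the boundary singularity of $d_\Omega$: the cross-ratio distance blows up as $x$ or $y$ approaches $p$ or $q$, so the truncation $\min(1,d_\Omega)$ is only an effective constraint in the interior of the needle. I would handle this by splitting the analysis into a bulk regime, where $d_\Omega(x,y)\le 1$ controls $h$ linearly in $|x-y|$ (up to boundary factors), and a near-boundary regime, where the truncation at $1$ dominates; in the latter regime, log-concavity forces the mass of $I_1$ or of $[p,q]$ near the relevant endpoint to be small, compensating for the weaker bound on $h$. Careful bookkeeping across both regimes, combined with the explicit needle form of $d_\Omega$, closes the 1D contradiction, and the localization lemma then lifts the result to the $n$-dimensional statement.
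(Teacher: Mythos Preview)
This theorem is not proved in the paper at all: it is quoted verbatim as Theorem~4.5 of \cite{VEMPALA-2005} and invoked as an external tool in the conductance argument of Lemma~\ref{lem:s-cond}. Your localization-lemma approach is exactly the route taken in the cited reference (and in the Lov\'asz--Simonovits and Lov\'asz--Vempala work it builds on), so at the level of a proof sketch your proposal is correct and matches the original argument; there is simply no in-paper proof to compare against.
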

Given an isoperimetric inequality, a total variation inequality is
typically used in a mixing time analysis to lower bound cross-ratio
distances and then lower bound the
conductance. Our approach is similar. We use the embedding assumption
to derive an isoperimetric inequality in the non-convex space
$\Sigma$. Then we relate cross-ratio distance $d_\Omega$ to distance
$\widetilde d_\Sigma$. This approximation is good when the points are
sufficiently far from the boundary. We incur a small error in the
mixing bound by ignoring points that are too close to the boundary.
Finally we use the curvature condition to derive a total variation
inequality and to lower bound the conductance.

\subsection{Cross-Ratio Distances}
\label{sec:isoperimetric}

The first step is to show the relationship between cross-ratio distances in the convex and non-convex spaces. We show that these distances are close as long as points are far from the boundary. These results will be used in the proof of the main theorem in Section~\ref{sec:everything} to obtain an isoperimetric inequality in the non-convex space. 
First we define a useful quantity.
\begin{defn}
Consider a convex set $\Omega$ with some subset $\Omega'$ and collinear points $\{a,x,b\}$ with  $a, b\in\partial \Omega$, $x\in\Omega'$, and $\abs{x-b} \le \abs{x-a}$. Let $c$ be a point on $\partial \Omega$. Let $R(a,x,b,c) =\abs{x-b}/\abs{x-c}$. We use $R(\Omega,\Omega')$ to denote the maximum of $R(a,x,b,c)$ over all such points. We use $R_\epsilon$ to denote $R(\Omega,\Omega^\epsilon)$. 
\end{defn}
The following lemma is the main technical lemma, and we use it to express $\widetilde d_\Sigma$ in terms of $d_\Omega$. 
\begin{lem}
\label{eq:ineq}
Let $\epsilon$ be a positive scalar such that $R_\epsilon(1+8R_\epsilon)\geq 2/3$. Let $\{a,x_1,x_2,b\}$ be collinear such that $a$ and $b$ are on the boundary of $\Omega$, $x_1,x_2\in \Omega^\epsilon$, and $\abs{x_1 -a} < \abs{x_2 - a}$. Let $c$ and $d$ be two points on the boundary of $\Omega$. Then
\[
\frac{\abs{b-a}}{\abs{a-x_1}} \cdot \frac{\abs{x_1 - c}}{\abs{c-d}} \cdot \frac{\abs{d-x_2}}{\abs{x_2-b}} \ge \frac{1}{4 R_\epsilon (1+2 R_\epsilon)} \;.
\]
\end{lem}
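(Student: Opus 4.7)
The plan is to reduce the claim to elementary algebra. First introduce the six relevant distances: $\alpha = |a-x_1|$, $\beta = |x_1-x_2|$, $\gamma = |x_2-b|$, $p = |x_1-c|$, $q = |x_2-d|$, so that $|b-a| = \alpha+\beta+\gamma$ and the left-hand side rewrites as $(\alpha+\beta+\gamma)\,pq\,/\,(\alpha\gamma\,|c-d|)$. Applying the triangle inequality along the broken path $c \to x_1 \to x_2 \to d$ gives $|c-d| \le p+\beta+q$, so it is enough to lower bound
\[
\frac{(\alpha+\beta+\gamma)\,pq}{\alpha\gamma\,(p+\beta+q)}.
\]

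The second ingredient is the definition of $R_\epsilon$. Since $[a,x_1,b]$ is a chord of the convex set $\Omega$ passing through $x_1 \in \Omega^\epsilon$, applying the definition with the closer of $a,b$ playing the role of ``$b$'' and $c$ playing the role of ``$c$'' gives $p \ge \min(\alpha,\beta+\gamma)/R_\epsilon$; the analogous application at $x_2$ with $d$ gives $q \ge \min(\alpha+\beta,\gamma)/R_\epsilon$. The function $g(p,q) = pq/(p+\beta+q)$ is strictly increasing in each variable, so its infimum over the feasible rectangle is attained at the corner where $p$ and $q$ are at their lower bounds.

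This reduces the problem to a case analysis on which endpoint of $[a,b]$ is closer to each $x_i$. The combination $\alpha > \beta+\gamma$ and $\gamma > \alpha+\beta$ is contradictory (it forces $\alpha > \gamma > \alpha$), leaving three cases: (i) $\alpha \le \beta+\gamma$ and $\gamma \le \alpha+\beta$; (ii) $\alpha \le \beta+\gamma$ and $\gamma > \alpha+\beta$; and (iii) $\alpha > \beta+\gamma$ and $\gamma \le \alpha+\beta$. In each case, plugging the corner value of $(p,q)$ into $g$ and multiplying by $(\alpha+\beta+\gamma)/(\alpha\gamma)$ produces an explicit lower bound on the LHS depending only on $\alpha,\beta,\gamma,R_\epsilon$.

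Verifying the target bound $1/(4R_\epsilon(1+2R_\epsilon))$ in each case then reduces, after clearing denominators, to a polynomial inequality in $\alpha,\beta,\gamma,R_\epsilon$ with nonnegative coefficients---for instance, Case (i) reduces to $(3+8R_\epsilon)\alpha + (4+7R_\epsilon)\beta + (3+8R_\epsilon)\gamma \ge 0$, and Cases (ii) and (iii) similarly collapse to trivially nonnegative sums. The main obstacle is therefore the bookkeeping of the case analysis rather than the algebra: one must correctly identify the closer endpoint in each geometric configuration before invoking $R_\epsilon$. I do not expect the auxiliary hypothesis $R_\epsilon(1+8R_\epsilon) \ge 2/3$ to enter the proof itself, since $R_\epsilon \ge 1$ makes it automatic; I suspect it is stated this way to align the conclusion with its downstream use in relating $\widetilde d_\Sigma$ to the isoperimetric inequality of Theorem~\ref{thm:iso-ineq}.
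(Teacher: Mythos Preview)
Your argument is correct and considerably cleaner than the paper's. The paper proves the same inequality by writing the left-hand side as a ratio $A/B$ of two cross-ratios and then performing a nested case analysis (nine cases in all) on the position of $x_1,x_2$ relative to the midpoint of $[a,b]$ and on the relative sizes of $|c-x_1|$, $|x_2-d|$, $|c-d|$, $|a-b|$; one of the sub-cases (the paper's Case~2.2.2) even requires the law of cosines and a quadratic-formula argument. Your route---bounding $|c-d|\le p+\beta+q$ by the triangle inequality, then using monotonicity of $pq/(p+\beta+q)$ to push $p$ and $q$ down to their $R_\epsilon$-lower bounds---collapses this to three symmetric cases, each of which is a polynomial inequality with nonnegative coefficients. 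One small correction to your closing remark: the hypothesis $R_\epsilon(1+8R_\epsilon)\ge 2/3$ is not there for downstream alignment; the paper actually invokes it in its Case~1.2.2 to deduce $\tfrac{1}{2+R_\epsilon}\ge\tfrac{1}{4R_\epsilon(1+2R_\epsilon)}$. You are nonetheless right that it is redundant, since taking $c=b$ in the definition shows $R_\epsilon\ge 1$, which makes both that inequality and the stated hypothesis automatic; your proof simply never needs it.
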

\begin{proof}
Let 
\[
A = \frac{\abs{b-a}\abs{x_1-x_2}}{\abs{a-x_1}\abs{x_2-b}} \,, \qquad B = \frac{\abs{c-d} \abs{x_1-x_2}}{\abs{c - x_1} \abs{x_2 - d}} \;.
\]
We prove the claim by proving that $A/B \ge 1/(4 R_\epsilon(1+2 R_\epsilon))$. \\
\textbf{\underline{Case 1, $\abs{x_1 - b} \le \abs{x_1 - a}$}: } In this case, $x_1$ and $x_2$ are both on the line segment $[(a+b)/2, b]$. We consider two cases.\\
\textbf{\underline{\underline{Case 1.1}}, $\abs{x_2 - d} \le \abs{x_2 - b}$: } We have that
\beq
\label{eq:eqqq}
\abs{c-d} = \abs{c-x_1} + \abs{x_1-x_2} + \abs{x_2 - d}\le \abs{c-x_1} + \abs{x_1- b} + \abs{x_2 - b} \;.
\eeq
Because $\abs{x_1 -a} < \abs{x_2 - a}$ by the assumption of the lemma, we have $\abs{x_2 -b} < \abs{x_1 - b}$. Also because $\abs{x_1 - b} \le \abs{x_1 - a}$ in Case 1, we have $ \abs{x_1-b}/\abs{c-x_1} \le R_\epsilon$. Thus
\beq
\label{eq:eqqq2}
\frac{ \abs{x_2-b} }{ \abs{c-x_1} } \leq \frac{ \abs{x_1-b} }{ \abs{c-x_1} } \leq R_\epsilon \;.
\eeq
By \eqref{eq:eqqq} and \eqref{eq:eqqq2}, 
\begin{align*}
\frac{\abs{c-d}}{\abs{c-x_1}} \le 1 + \frac{\abs{x_1-b}}{\abs{c-x_1}} + \frac{\abs{x_2 - b}}{\abs{c-x_1}}\le 1 + R_\epsilon +  \frac{\abs{x_2 - b}}{\abs{c-x_1}}\le 1 + 2 R_\epsilon \;.
\end{align*}
We use also that by definition of $R_\epsilon$  $\abs{x_2 - b} \le R_\epsilon \abs{x_2 - d}$. This and the previous result lets us bound 
\begin{align*}
B &\le (1+2R_\epsilon) \frac{\abs{x_1 - x_2}}{\abs{x_2-d}}\,,\\
A &\ge \frac{\abs{b-a}\abs{x_1-x_2}}{R_\epsilon \abs{a-x_1}\abs{x_2-d}} \ge \frac{\abs{x_1-x_2}}{R_\epsilon \abs{x_2-d}}\,,
\end{align*}
and conclude 
\[
\frac{A}{B} \ge \frac{1}{R_\epsilon(1+2 R_\epsilon)} \ge \frac{1}{4 R_\epsilon (1+2 R_\epsilon)}\;.
\]
\textbf{\underline{\underline{Case 1.2}}, $\abs{x_2 - d} > \abs{x_2 - b}$: } \\
\textbf{Case 1.2.1, $\abs{c-x_1} \le \abs{x_2 - d}$: } We have that $\abs{x_2 - b} < \abs{x_1 - b} \le R_\epsilon \abs{x_1 - c}$. 
Thus, 
\begin{align*}
\frac{A}{B} &\ge \frac{\abs{a-b}\abs{x_1 - x_2}}{B R_\epsilon \abs{a-x_1}\abs{x_1 - c}}\\ 
&= \frac{\abs{a-b}}{R_\epsilon \abs{a-x_1}} \cdot \frac{\abs{d-x_2}}{\abs{c-d}}\ge \frac{\abs{d-x_2}}{R_\epsilon \abs{c-d}}\\
&\ge \frac{\abs{d-x_2}}{R_\epsilon (\abs{d - x_2}+\abs{x_2 - x_1} + \abs{x_1 - c})}\\ 
&\ge \frac{\abs{d-x_2}}{R_\epsilon (\abs{d - x_2}+(1+R_\epsilon)\abs{x_1 - c})}\\
&= \frac{1}{R_\epsilon \left(1+(1+R_\epsilon)\frac{\abs{x_1 - c}}{\abs{d-x_2}}\right)}\ge \frac{1}{R_\epsilon (2+R_\epsilon)}\\ 
&\ge \frac{1}{4 R_\epsilon (1+2 R_\epsilon)}\;.
\end{align*}
\textbf{Case 1.2.2, $\abs{c-x_1} > \abs{x_2 - d}$: } As before, we bound $A$ and $B$ separately:
\begin{align*}
B &\le \frac{\abs{c-d} \abs{x_1-x_2}}{\abs{c - x_1} \abs{x_2 - b}}\\ 
&\le \frac{\abs{x_1-x_2}}{ \abs{x_2 - b}} \cdot \frac{\abs{c-x_1} + R_\epsilon \abs{c-x_1} + \abs{x_2 - d}}{\abs{c-x_1}}\\ 
&\le (2+R_\epsilon) \frac{\abs{x_1-x_2}}{ \abs{x_2 - b}} \,,
\end{align*}
and 
\[
A = \frac{\abs{b-a}\abs{x_1-x_2}}{\abs{a-x_1}\abs{x_2-b}} \ge \frac{\abs{x_1-x_2}}{\abs{x_2-b}}\;.
\] 
Putting these together, 
\[
\frac{A}{B} \ge \frac{1}{2 + R_\epsilon} \ge \frac{1}{4 R_\epsilon (1+2 R_\epsilon)}\,,
\] 
where the second inequality holds because $R_\epsilon(1+8R_\epsilon)\geq 2/3$.\\
\textbf{\underline{Case 2, $\abs{x_1 - b} > \abs{x_1 - a}$ and
$\abs{x_2 - b} < \abs{x_2 - a}$}: } In this case, $x_1$ and $x_2$ are
on opposite sides of the point $(a+b)/2$. Let $M$ be a positive constant. We will choose $M=4$ later.  \\
\textbf{\underline{\underline{Case 2.1}}, $\abs{c-d} \le M \abs{c-x_1}$: } We bound 
\[
B \le \frac{M \abs{x_1 - x_2}}{\abs{x_2 - d}} \le \frac{M R_\epsilon \abs{x_1 - x_2}}{\abs{x_2 - b}}
\]
and conclude
\[
\frac{A}{B} \ge \frac{\abs{a-b}}{M R_\epsilon \abs{a-x_1}} \ge \frac{1}{M R_\epsilon} \ge \frac{1}{4 R_\epsilon (1+2 R_\epsilon)} \;.
\]
\textbf{\underline{\underline{Case 2.2}}, $\abs{c-d} > M \abs{c-x_1}$: }\\
\textbf{Case 2.2.1, $\abs{c-d} \le M \abs{a-b}$: } We have that
\[
\frac{A}{B} \ge \frac{1}{M R_\epsilon^2} \ge \frac{1}{4 R_\epsilon (1+2 R_\epsilon)} \;.
\]
\textbf{Case 2.2.2, $\abs{c-d} > M \abs{a-b}$: } Let $x_0$ be a point on the line segment $[x_1,x_2]$. Let $\beta_1$ be the angle between line segments $[c,x_1]$ and $[x_1,x_0]$. We write
\begin{align*}
\abs{c-x_0}^2 &= \abs{x_1 - x_0}^2 + \abs{x_1 - c}^2\\ 
&\qquad- 2 \abs{x_1 - c}\cdot \abs{x_1 - x_0} \cos \beta_1 \\
&\le \frac{1}{M^2} \abs{c-d}^2 + \frac{1}{M^2} \abs{c-d}^2 + \frac{2}{M^2} \abs{c-d}^2\\ 
&= \frac{4}{M^2} \abs{c-d}^2 \;. 
\end{align*}
By the triangle inequality, 
\[
\abs{d - x_0} \ge \abs{c - d} - \abs{c - x_0}\ge \left( 1 - \frac{2}{M} \right) \abs{c-d} \;.
\] 
Let $\beta_2$ be the angle between line segments $[d,x_2]$ and $[x_2,x_0]$. Let $w=1 - 2/M$. We write
\begin{align*}
w^2 \abs{c-d}^2 &\le \abs{d-x_0}^2 \\
&= \abs{x_2 - x_0}^2 + \abs{x_2 - d}^2- 2 \abs{x_2 - d}\cdot \abs{x_2 - x_0} \cos \beta_2 \\
&\le  \frac{1}{M^2} \abs{c-d}^2+\abs{x_2 - d}^2 + \frac{2}{M} \abs{d - x_2} \cdot \abs{c-d} \;.
\end{align*}
Thus,
\begin{align*}
&\abs{x_2 - d}^2 + \frac{2}{M} \abs{d - x_2} \cdot \abs{c-d}+ \left( \frac{4}{M} - \frac{3}{M^2} -1 \right) \abs{c-d}^2 \ge 0 \,,
\end{align*}
which is a quadratic inequality in $\abs{x_2 - d}$. Thus it holds that 
\[
\abs{x_2 - d} \ge \left( -\frac{1}{M} + \abs{\frac{2}{M}-1} \right) \abs{c-d} \;.
\] 
If we choose $M=4$, then $\abs{x_2 - d} \ge 0.25 \abs{c-d}$ and
\[
B \le \frac{4\abs{x_1 - x_2}}{ \abs{x_1 - c}} \le \frac{4 R_\epsilon \abs{x_1 - x_2}}{\abs{x_1 - a}}\;,
\]
 yielding
\[
\frac{A}{B} \ge \frac{\abs{a-b}}{ 4 R_\epsilon \abs{b-x_2}} \ge \frac{1}{4 R_\epsilon} \ge \frac{1}{4 R_\epsilon (1+2 R_\epsilon)}\;.
\]
Finally, observe that Case~3 follows by symmetry from Case~1.
\end{proof}
The following lemma states that the distance $d_\Omega$ does not increase by adding more steps.
\begin{lem}
\label{lem:triangle-ineq}
Let $a,y_1,y_2,\dots,y_m, b$ be in the convex body $\Omega$ such that  the points $\{a,y_1,y_2,\dots,y_m, b\}$ are collinear. Further assume that $a,b\in\partial \Omega$. We have that
\[
d_\Omega(y_1,y_2) + \dots + d_\Omega(y_{m-1},y_m) \le d_\Omega(y_1,y_m) \;.
\]
\end{lem}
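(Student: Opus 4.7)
The plan is to reduce to the three-point case by induction and then verify that case by a direct algebraic computation. Since $\{a,y_1,\ldots,y_m,b\}$ are all collinear with $a,b\in\partial\Omega$, the chord of $\Omega$ through any pair $y_i,y_j$ is the same line and has the same endpoints $a,b$ on $\partial\Omega$. Thus every cross-ratio distance $d_\Omega(y_i,y_j)$ in the statement uses the same boundary points, which turns the inequality into a purely one-dimensional claim.

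WLOG order the points so that along the line we see $a, y_1, y_2, \ldots, y_m, b$. Let $s_i = |a-y_i|$ and $t_i = |y_i - b|$, so $s_i + t_i = |a-b| =: L$ is constant and $|y_i - y_{i+1}| = s_{i+1} - s_i$. Dividing through by $L$, the desired inequality becomes
\[
\sum_{i=1}^{m-1}\frac{s_{i+1}-s_i}{s_i\,t_{i+1}}\ \le\ \frac{s_m - s_1}{s_1\,t_m}.
\]
By a straightforward induction it suffices to prove the three-point case, namely that for $0 < s_1 < s_2 < s_3 < L$ and $t_i = L - s_i$,
\[
\frac{s_2-s_1}{s_1 t_2} + \frac{s_3 - s_2}{s_2 t_3}\ \le\ \frac{s_3 - s_1}{s_1 t_3}.
\]
Once this is established, one iteratively applies it to $\{y_1,y_2,y_3\}$, then $\{y_1,y_3,y_4\}$, and so on, to obtain the full telescoping bound.

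For the three-point case, I would clear denominators and compute the difference
\[
\frac{s_3-s_1}{s_1 t_3} - \frac{s_2-s_1}{s_1 t_2} - \frac{s_3-s_2}{s_2 t_3}\ =\ \frac{(s_3-s_1)s_2 t_2 - (s_2-s_1) s_2 t_3 - (s_3-s_2) s_1 t_2}{s_1 s_2 t_2 t_3}.
\]
Expanding the numerator and collecting terms simplifies it to $(s_3 t_2 - s_2 t_3)(s_2 - s_1)$. Using $t_i = L - s_i$ gives $s_3 t_2 - s_2 t_3 = L(s_3 - s_2) > 0$, and $s_2 - s_1 > 0$ by the ordering, so the whole expression is nonnegative. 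This finishes the three-point case and hence, by the induction, the lemma.

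I do not expect any real obstacle: the only subtlety is noticing that all the distances $d_\Omega(y_i,y_j)$ share the same chord endpoints $a,b$ because of collinearity, which is what makes the problem reduce to a one-dimensional algebraic inequality. The rest is a mechanical factoring exercise and a telescoping induction.
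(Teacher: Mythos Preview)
Your proof is correct. Both you and the paper first observe that collinearity forces all the cross-ratio distances to use the same chord endpoints $a,b$, reducing the statement to a one-dimensional inequality. From there the approaches diverge slightly: the paper proves the full $m$-point inequality in one stroke by splitting $|y_1-y_m|=\sum_i |y_i-y_{i+1}|$ in the numerator of $d_\Omega(y_1,y_m)$ and then using the monotonicity $|a-y_1|\le |a-y_i|$ and $|y_m-b|\le |y_{i+1}-b|$ to shrink each denominator to the one appearing in $d_\Omega(y_i,y_{i+1})$. Your argument instead reduces to the three-point case by induction and verifies that case by clearing denominators and factoring, which you did correctly (the numerator indeed factors as $(s_3 t_2 - s_2 t_3)(s_2-s_1)=L(s_3-s_2)(s_2-s_1)$). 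The paper's route is a bit more economical---no induction, no case $m=3$ to isolate---while yours is a perfectly valid mechanical alternative that makes the algebraic content explicit.
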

\begin{proof}
We write
\begin{align*}
d_\Omega(y_1,y_m) &= \frac{\abs{a-b}\abs{y_1-y_m}}{\abs{a-y_1}\abs{y_m-b}}\\
&= \frac{\abs{a-b}\abs{y_1-y_2}}{\abs{a-y_1}\abs{y_m-b}} + \frac{\abs{a-b}\abs{y_2-y_3}}{\abs{a-y_1}\abs{y_m-b}}+ \dots + \frac{\abs{a-b}\abs{y_{m-1}-y_m}}{\abs{a-y_1}\abs{y_m-b}}\\
&\ge \frac{\abs{a-b}\abs{y_1-y_2}}{\abs{a-y_1}\abs{y_2-b}} + \frac{\abs{a-b}\abs{y_2-y_3}}{\abs{a-y_2}\abs{y_3-b}}+ \dots + \frac{\abs{a-b}\abs{y_{m-1}-y_m}}{\abs{a-y_{m-1}}\abs{y_m-b}}\\ 
&= d_\Omega(y_1,y_2) + \dots + d_\Omega(y_{m-1},y_m) \;.
\end{align*}
\end{proof}
The next lemma upper bounds $\widetilde d_\Sigma$ in terms of $d_\Omega$. 
\begin{lem}
\label{lem:lipschitz}
Let $x_1,x_2\in \Omega^\epsilon$. We have that
\[
\widetilde d_\Sigma(g(x_1),g(x_2)) \le 4 L_\Sigma^2 L_\Omega^2 R_\epsilon (1+2 R_\epsilon) d_\Omega(x_1,x_2) \;.
\]
\end{lem}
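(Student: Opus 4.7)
The plan is to upper bound $\widetilde d_\Sigma(g(x_1), g(x_2))$ by exhibiting a piecewise-linear path in $\Sigma$ from $g(x_1)$ to $g(x_2)$ obtained by pushing a fine polygonal approximation of $[x_1,x_2]$ through the map $g$. The main work will be reducing each short piece to the cross-ratio inequality of Lemma~\ref{eq:ineq} via the bilipschitz estimates, and then telescoping along the common chord with Lemma~\ref{lem:triangle-ineq}.

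First I would take a fine partition $x_1 = y_0, y_1, \dots, y_m = x_2$ of the segment $[x_1,x_2]$; since $\Omega^\epsilon$ is convex and contains the endpoints, the whole segment lies in $\Omega^\epsilon$. Setting $w_i = g(y_i)$, the curve $g([x_1,x_2])$ is a continuous compact arc in the open interior of $\Sigma$ (as $g$ is a bilipschitz homeomorphism sending interior to interior), so it is uniformly bounded away from $\partial\Sigma$; for $m$ large enough, each chord $[w_{i-1}, w_i]$ stays in $\Sigma$. Padding with repeated points if needed to match the $\tau$-segment quantifier in the definition of $\widetilde d_\Sigma$ (and using $d_\Sigma(v,v)=0$), this produces a valid path, giving
\[
\widetilde d_\Sigma(g(x_1), g(x_2)) \le \sum_{i=1}^m d_\Sigma(w_{i-1}, w_i).
\]

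Next I would bound each term individually. Let $c'_i, d'_i \in \partial\Sigma$ be the endpoints of $\ell_\Sigma(w_{i-1}, w_i)$ and set $c_i = g^{-1}(c'_i)$, $d_i = g^{-1}(d'_i)$, which lie in $\partial\Omega$ by the homeomorphism property. Applying the bilipschitz bounds from Assumption~\ref{ass:mapping} to the numerator ($L_\Sigma$ twice) and denominator ($L_\Omega$ twice) of the defining fraction for $d_\Sigma$,
\[
d_\Sigma(w_{i-1}, w_i) \le L_\Sigma^2 L_\Omega^2 \cdot \frac{|c_i - d_i|\,|y_{i-1} - y_i|}{|y_{i-1} - c_i|\,|y_i - d_i|}.
\]
Since $y_{i-1}, y_i \in \Omega^\epsilon$ are collinear on the chord of $\Omega$ through $x_1, x_2$, Lemma~\ref{eq:ineq} applied with boundary points $c_i, d_i$ (after rearranging its product inequality and multiplying through by $|y_{i-1}-y_i|$) yields
\[
\frac{|c_i - d_i|\,|y_{i-1} - y_i|}{|y_{i-1} - c_i|\,|y_i - d_i|} \le 4 R_\epsilon(1 + 2R_\epsilon)\, d_\Omega(y_{i-1}, y_i).
\]

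Finally, the points $\{a, y_0, y_1, \dots, y_m, b\}$ all lie on the same chord of $\Omega$ through $x_1, x_2$, so Lemma~\ref{lem:triangle-ineq} gives $\sum_i d_\Omega(y_{i-1}, y_i) \le d_\Omega(y_0, y_m) = d_\Omega(x_1, x_2)$, and combining the three displays completes the bound. The main obstacle is the very first step: justifying both that the polygonal image path genuinely lies inside $\Sigma$ for sufficiently fine subdivision, and that it can be made to respect the $\tau$-segment budget in the definition of $\widetilde d_\Sigma$. Once that is granted, the remainder is a direct rewrite combining Lemma~\ref{eq:ineq} with the bilipschitz estimates.
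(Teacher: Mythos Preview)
Your proposal is correct and follows essentially the same route as the paper: subdivide $[x_1,x_2]$ into collinear pieces whose images form a visible chain in $\Sigma$, bound each $d_\Sigma$ term by $4L_\Sigma^2 L_\Omega^2 R_\epsilon(1+2R_\epsilon)\, d_\Omega$ via the bilipschitz estimates together with Lemma~\ref{eq:ineq}, and then telescope along the common chord with Lemma~\ref{lem:triangle-ineq}. The paper separates out the visible case before extending, but the mechanics are identical; your worry about matching the fixed $\tau$-segment budget is a technicality the paper's own proof also leaves implicit.
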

\begin{proof}
First we prove the inequality for the case that $g(x_1)\in \view(g(x_2))$. Let $a,b\in\partial\Omega$ be such that the points $\{a,x_1,x_2,b\}$ are collinear. Let $c,d\in \Omega$ be points such that the points  $\{g(c),g(x_1),g(x_2),g(d)\}$ are collinear and the line connecting $g(c)$ and $g(d)$ is inside $\Sigma$. By the Lipschitzity of $g$ and $g^{-1}$ and Lemma~\ref{eq:ineq},
\begin{align}
\notag
\widetilde d_\Sigma(g(x_1),g(x_2)) &= \frac{\abs{g(c)-g(d)}\abs{g(x_1)-g(x_2)}}{\abs{g(c)-g(x_1)}\abs{g(x_2)-g(d)}}\\
\notag
&\le \frac{L_\Sigma^2 L_\Omega^2 \abs{c-d}\abs{x_1-x_2}}{\abs{c-x_1}\abs{x_2-d}} \\
\notag
&\le L_\Sigma^2 L_\Omega^2 4 R_\epsilon (1+2 R_\epsilon) \frac{\abs{a-b}\abs{x_1-x_2}}{\abs{a-x_1}\abs{x_2-b}}\\
\label{eq:init-ineq}
&= 4 L_\Sigma^2 L_\Omega^2 R_\epsilon (1+2 R_\epsilon) d_\Omega(x_1,x_2) \;.
\end{align}
Now consider the more general case where $g(x_1)\notin \view(g(x_2))$. Find a set of points $y_1,\dots,y_\tau$ such that the line segments $[g(x_1),g(y_1)], [g(y_1),g(y_2)],\dots,[g(y_\tau), g(x_2)]$ are all inside $\Sigma$. By definition of $\widetilde d_\Sigma$, \eqref{eq:init-ineq}, and Lemma~\ref{lem:triangle-ineq}, $\widetilde d_\Sigma(g(x_1),g(x_2))$ can be upper bounded by
\begin{align*}
\inf_{u_{1:\tau-1}\in \Sigma} &d_\Sigma(g(x_1),u_1)+ d_\Sigma(u_1,u_2) + \dots + d_\Sigma(u_{\tau-1}, g(x_2)) \\
&\le d_\Sigma(g(x_1),g(y_1)) + d_\Sigma(g(y_1), g(y_2))+ \dots + d_\Sigma(g(y_{\tau-1}), g(x_2)) \\
&\le 4 L_\Sigma^2 L_\Omega^2 R_\epsilon (1+2 R_\epsilon) (d_\Omega(x_1,y_1)+d_\Omega(y_1,y_2)+\dots+d_\Omega(y_{\tau-1}, x_2)) \\
&\le 4 L_\Sigma^2 L_\Omega^2 R_\epsilon (1+2 R_\epsilon) d_\Omega(x_1,x_2)
\end{align*}
\end{proof}

\subsection{Total Variation Inequality}
\label{sec:total_variation}

In this section, we show that if two points $u,v\in\Sigma$ are close to each other, then $P_u$ and $P_v$ are also close. First we show that if the two points are close to each other, then they have similar views. 
\begin{lem}[Overlapping Views]
\label{lem:view}
Given the curvature $\kappa$ defined in Assumption~\ref{ass:curvature}, for any $u,v\in\Sigma^\epsilon$ such that $\abs{u-v}\le \epsilon' \le \epsilon$,
\[
P_u(\{x : x\notin \view(v)\}) \le \max\left( \frac{4}{\pi},  \frac{\kappa }{\sin(\pi/8)} \right) \frac{\epsilon'}{\epsilon} \;.
\]
\end{lem}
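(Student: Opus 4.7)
The plan is to express the probability in spherical coordinates centred at $u$, reduce the visibility question in each direction to a planar problem, and then invoke Assumption~\ref{ass:curvature}. Applying the hit-and-run density~\eqref{eq:pdf} with the change of variables $x=u+r\theta$ for $\theta\in S^{n-1}$ and $r>0$, the $r^{n-1}$ Jacobian cancels the $|u-x|^{n-1}$ in the denominator and the probability simplifies to
\[
P_u(\{x:x\notin\view(v)\})\;=\;\frac{1}{n\pi_n}\int_{S^{n-1}}\frac{|B_\theta|}{|\ell_\theta|}\,d\theta,
\]
where $B_\theta\subseteq\ell_\theta\cap\Sigma$ denotes the set of chord points $x$ with $[v,x]\not\subset\Sigma$, measured in one-dimensional Lebesgue measure. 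Since $u\in\Sigma^\epsilon$, the ball $B(u,\epsilon)$ is inside $\Sigma$, so $|\ell_\theta|\ge 2\epsilon$ uniformly in $\theta$; this is where the $1/\epsilon$ factor in the final bound originates.

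Next I would bound $|B_\theta|$ by a planar shadow argument. For each $\theta$ not parallel to $v-u$, every segment $[v,x]$ with $x\in\ell_\theta$ lies in the affine plane $\cH_\theta=\mathrm{span}\{u,v,\theta\}$, so visibility from $v$ depends only on the cross section $\Sigma\cap\cH_\theta$. A point $x\in\ell_\theta\cap\Sigma$ is bad iff the open segment $(v,x)$ crosses $\partial(\Sigma\cap\cH_\theta)$, i.e.\ $x$ lies in the shadow of some tangent excursion of the boundary as seen from $v$. I would parametrise bad $x$ by their first crossing point $p(x)\in\partial(\Sigma\cap\cH_\theta)$; the Jacobian of the map $x\mapsto p(x)$ involves sines of the angles among $\ell_\theta$, the line-of-sight $[v,x]$, and the boundary tangent at $p$. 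Integrating this Jacobian over all shadow-casting boundary arcs and using that the total turning of $\partial(\Sigma\cap\cH_\theta)$ is at most $\kappa_{\cH_\theta}\cR_{\cH_\theta}\le\kappa$, I expect an estimate of the form $|B_\theta|\le C\kappa\,\epsilon'/\sin(\alpha_\theta)$, where $\alpha_\theta$ is the angle between $\theta$ and $v-u$ and $\epsilon'$ enters as the perpendicular distance from $v$ to $\ell_\theta$, controlling the shadow depth.

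To recover the maximum in the statement I would split $S^{n-1}$ at the threshold $\alpha_\theta=\pi/8$. On the transverse set $\{\alpha_\theta\ge\pi/8\}$, combining the previous estimate with $\sin(\alpha_\theta)\ge\sin(\pi/8)$ and $|\ell_\theta|\ge 2\epsilon$ yields the $\kappa/\sin(\pi/8)\cdot\epsilon'/\epsilon$ contribution after integration. On the small aligned cap $\{\alpha_\theta<\pi/8\}$ the curvature-based bound degenerates (the sine factor vanishes), but the chord $\ell_\theta$ nearly passes through $v$ and a direct geometric comparison confines $B_\theta$ to a sub-arc of length $O(\epsilon')$ near the far endpoint of the chord. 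Integrated against the small spherical measure of this cap, this produces the $4/\pi\cdot\epsilon'/\epsilon$ term.

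The main obstacle is the planar shadow bound in the second paragraph: correctly accounting for all trigonometric factors when translating the total-turning assumption $\kappa_{\cH}\cR_\cH\le\kappa$ into a concrete length estimate for $B_\theta$, and dealing with boundary arcs close to the chord where the map $x\mapsto p(x)$ may fail to be single-valued or differentiable. Locking down the precise constants $4/\pi$ and $\sin(\pi/8)$ will also require a careful trigonometric optimisation over the splitting angle.
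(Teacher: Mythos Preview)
Your route is genuinely different from the paper's, and the step you flag as the obstacle is precisely the one the paper avoids. The paper does \emph{not} try to bound the bad length $|B_\theta|$ on each chord. It coarsens immediately via
\[
P_u(\{x:x\notin\view(v)\})\le P_u(\{L:L\notin\view(v)\}),
\]
i.e.\ it bounds the probability that the whole chord chosen by \HNR\ contains some point invisible from $v$. This makes the question purely angular: in each two-dimensional plane through $u$ and $v$, a direction at $u$ is ``bad'' iff the corresponding chord enters the shadow of the boundary curve as seen from $v$, and these bad directions form wedges delimited by tangent lines from $v$ to $\partial\Sigma\cap\cH$. Since $v\in\Sigma^\epsilon$, every tangent point is at distance at least $\epsilon$ from $v$, while $|u-v|\le\epsilon'$; an elementary triangle argument then gives $\sin(\text{wedge angle at }u)\le\epsilon'/\epsilon$ for each shadow. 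The curvature assumption enters not to bound shadow \emph{lengths} but to lower-bound the \emph{visible} angle accompanying each shadow: the total-turning bound forces at least $\lambda_0=\min\bigl(\pi/4,\sin(\pi/8)/\kappa\bigr)$ of visible angle per shadow wedge. Dividing invisible by visible gives the stated bound, and the constants $4/\pi$ and $\sin(\pi/8)$ arise from this $\lambda_0$, not from a spherical-cap split at angle $\pi/8$ as you conjectured.

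Your proposed shadow-length bound $|B_\theta|\le C\kappa\,\epsilon'/\sin\alpha_\theta$ is not established, and it is not clear it holds: the first-crossing map $x\mapsto p(x)$ is many-to-one near tangencies, and nothing in your sketch explains why the total bad length should scale with the total turning $\kappa$ rather than, say, with the number of tangent excursions or with the diameter of the planar section. The paper's angular coarsening sidesteps exactly this difficulty, at the modest cost of over-counting (a chord with a single invisible point is declared entirely bad), which is harmless for the final estimate.
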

The proof is in Appendix~\ref{app:proofs}. Next we define some notation and show some useful inequalities. For $u
\in \Sigma$, let $w$ be a random point obtained by making one step of
\HNR~from $u$. Define $F(u)$ by $\Prob{\abs{w-u} \le F(u)}=1/8$. If
$d(u, \partial\Sigma)\ge h$, less than $1/8$ of any chord passing
through $u$ is inside $B(u,h/16)$. Thus $\Prob{\abs{u-w}\le h/16} \le
1/8$, which implies 
\beq
\label{eq:Fu}
F(u) \ge \frac{h}{16} \;.
\eeq
Intuitively, the total variation inequality implies that if $u$ and $v$ are close geometrically, then their proposal distributions must be close as well. 
\begin{lem}
\label{lem:total-var}
Let $u,v\in \Sigma^\epsilon$ be two points that see each other. Let $\epsilon' = \frac{\epsilon}{6} \min\left( \frac{\pi}{4},
\frac{\sin(\pi/8)}{\kappa} \right)$. Suppose that 
\[
d_\Sigma(u,v) < \frac{\epsilon}{24 D_\Sigma} \qquad \text{ and }\qquad \abs{u-v} < \min\left(\frac{2 F(u)}{\sqrt{n}}, \epsilon' \right) \;. 
\]
Then, 
\[
\abs{P_u - P_v} < 1- \frac{\epsilon}{8 e^4 D_\Sigma} \;.
\] 
\end{lem}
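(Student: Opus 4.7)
My plan is to start from the standard identity
\[
|P_u - P_v|_{TV} \;=\; 1 - \int_\Sigma \min(f_u(w), f_v(w))\, dw
\]
and lower-bound the overlap integral by $\epsilon/(8 e^4 D_\Sigma)$. To this end I would introduce a ``good set''
\[
A \;=\; \{w \in \Sigma : w \in \view(u) \cap \view(v),\; |u-w| \ge F(u)\},
\]
on which both densities are positive and comparable via~\eqref{eq:pdf}. The two tasks are then (i) showing $P_u(A)$ is bounded below by an absolute constant, and (ii) lower-bounding the pointwise ratio $f_v/f_u$ on $A$ by something of order $\epsilon/(D_\Sigma e^4)$, so that $\min(f_u, f_v) \ge (\epsilon/(D_\Sigma e^4)) f_u$ on $A$.

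For (i), the specific $\epsilon' = (\epsilon/6)\min(\pi/4,\sin(\pi/8)/\kappa)$ in the hypothesis makes the bound of Lemma~\ref{lem:view} telescope to exactly $1/6$, since $\max(4/\pi,\kappa/\sin(\pi/8))$ and $\min(\pi/4,\sin(\pi/8)/\kappa)$ are reciprocals; hence $P_u(w \notin \view(v)) \le 1/6$. The definition of $F(u)$ then gives $P_u(|u-w|\le F(u)) = 1/8$, so $P_u(A) \ge 1 - 1/6 - 1/8 > 1/2$. For (ii), formula~\eqref{eq:pdf} factors the ratio as
\[
\frac{f_v(w)}{f_u(w)} \;=\; \frac{|\ell_\Sigma(u,w)|}{|\ell_\Sigma(v,w)|} \left(\frac{|u-w|}{|v-w|}\right)^{n-1}.
\]
The chord factor is bounded below by $2\epsilon/D_\Sigma$ from the crude estimates $|\ell_\Sigma(u,w)| \ge 2\epsilon$ (since $u\in\Sigma^\epsilon$ forces the chord through $u$ to extend at least $\epsilon$ on each side before hitting $\partial\Sigma$) and $|\ell_\Sigma(v,w)| \le D_\Sigma$. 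The Euclidean factor is controlled via $|u-v|\le 2F(u)/\sqrt{n}$ together with $|u-w|\ge F(u)$. Once these two estimates are combined, integrating yields $\int \min(f_u,f_v) \ge (2\epsilon/(D_\Sigma e^4))\, P_u(A) \ge \epsilon/(D_\Sigma e^4)$, which is more than enough.

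The main obstacle will be the Euclidean ratio, because a naive estimate $(1 + |u-v|/|u-w|)^{n-1} \le (1+2/\sqrt{n})^{n-1}$ grows like $e^{2\sqrt{n}}$ rather than being bounded by $e^4$. Obtaining the claimed uniform constant will require either restricting $A$ further so that $|u-w|$ scales with $\sqrt{n}\,F(u)$ on a set of probability still bounded away from $0$, or bounding the ratio only in an expected sense by exploiting the hit-and-run density's radial structure rather than pointwise. The cross-ratio hypothesis $d_\Sigma(u,v) < \epsilon/(24 D_\Sigma)$ — which is strictly stronger than the Euclidean bound $|u-v|<\epsilon'$ — enters at precisely this step: in a non-convex space, two geometrically close points $u,v$ can have chords $\ell_\Sigma(u,w)$ and $\ell_\Sigma(v,w)$ whose lengths differ wildly due to narrow passages, and the cross-ratio bound, together with the low-curvature assumption behind Lemma~\ref{lem:view}, is what rules out those pathologies so that the chord factor in the density ratio behaves as in the convex case.
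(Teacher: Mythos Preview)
Your overall plan is sound and equivalent to the paper's, and you have correctly located the one real gap: on your set $A$ the pointwise ratio $(|u-w|/|v-w|)^{n-1}$ is not uniformly bounded, and neither of your suggested repairs is the right one. The missing ingredient is an \emph{angular} restriction rather than a radial one. Further restrict $A$ by requiring
\[
\bigl|(w-u)^\top(u-v)\bigr|\;\le\;\tfrac{1}{\sqrt{n}}\,|w-u|\,|u-v|;
\]
this excludes only a double spherical cap of directions about $\ell(u,v)$ whose $P_u$-measure is at most $1/6$ by a standard surface-area estimate, so $P_u(A)$ stays bounded below. On the restricted set, expand $|w-v|^2=|w-u|^2+|u-v|^2+2(w-u)^\top(u-v)$ and use both $|u-v|\le (2/\sqrt{n})|w-u|$ (from $|u-v|<2F(u)/\sqrt{n}$ and $|w-u|\ge F(u)$) and the angular bound to obtain $|w-v|^2\le(1+8/n)|w-u|^2$, hence $(|u-w|/|v-w|)^{n-1}\ge(1+4/n)^{-n}\ge e^{-4}$. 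This is precisely the paper's set $A_2$. Your alternative of forcing $|u-w|\gtrsim\sqrt{n}\,F(u)$ cannot work: already for $u$ at the center of a ball one has $F(u)=R/8$ and $P_u(|u-w|\ge \sqrt{n}\,F(u))=0$ once $n\ge 64$, so there is no uniform-in-$n$ lower bound on the probability of such a set.

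One correction on the role of the cross-ratio hypothesis. Your crude chord estimate $|\ell_\Sigma(u,w)|/|\ell_\Sigma(v,w)|\ge 2\epsilon/D_\Sigma$ (from $u\in\Sigma^\epsilon$ and the trivial diameter bound) is valid, and together with the angular trick above it already delivers the stated conclusion without ever invoking $d_\Sigma(u,v)<\epsilon/(24D_\Sigma)$. The paper takes a longer route here, following \citet{Lovasz-1999}: it adds a further restriction $A_3$ (that $w$ not lie too close to either endpoint of $\ell_\Sigma(u,w)$) and then applies Menelaus' theorem on the convex hull of six boundary points to compare $\ell_\Sigma(v,w)$ with $\ell_\Sigma(u,w)$ up to a factor $|w-v|/|w-u|$; the cross-ratio hypothesis is consumed in that Menelaus step. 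So your last paragraph misattributes its role: it is not what tames narrow-passage pathologies in the chord ratio---your crude bound already handles those using only $u\in\Sigma^\epsilon$---but rather a technical input to the paper's sharper chord comparison, which your argument appears to sidestep entirely.
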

The proof is in Appendix~\ref{app:proofs}. The proof uses ideas from proof of Lemma~9 of \cite{Lovasz-1999}. The proof of \cite{Lovasz-1999} heavily relies on the convexity of the space, which does not hold in our case. We overcome the difficulties using the low curvature assumption and the fact that $u$ and $v$ are sufficiently far from the boundary.

\subsection{Putting Everything Together}
\label{sec:everything}

Next we bound the conductance of \HNR. 
\begin{lem}
\label{lem:s-cond}
Let 
\[
\delta=\frac{9r}{320 e^4 n L_\Omega D_\Sigma}\,,\qquad G = \frac{1}{6}\min\left( \frac{\pi}{4}, \frac{\sin(\pi/8)}{\kappa} \right)\,,\qquad \epsilon' = \frac{9r}{20n}\,,\qquad N=\frac{9r}{80n L_\Sigma^2 L_\Omega^3 R_{\epsilon'} (1+2 R_{\epsilon'})}\,,
\]
where $r$ is the radius of ball $\Omega$
(so $r^n \pi_n=1$). The conductance $\Phi$ of \HNR~is at least 
\[
\frac{\delta}{4}\left( \frac{2}{5 n D_\Omega} \wedge N \left( \frac{1}{24 D_\Sigma} \wedge \frac{2}{\sqrt{n}} \left( \frac{1}{8\sqrt{n}} \wedge G \right) \right) \right) \;.
\]
\end{lem}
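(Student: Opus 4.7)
The plan is to run the classical three-case conductance argument of \citet{Lovasz-1999}, routing every non-convex step through the embedding $g$. I fix a measurable partition $\Sigma = A\cup B$ with $\vol(A)\le\vol(B)$, and define the bad sets
\[
A' = \{u\in A\cap\Sigma^{\epsilon'} : P_u(B)<\delta/2\}, \qquad B' = \{v\in B\cap\Sigma^{\epsilon'} : P_v(A)<\delta/2\},
\]
restricted to the interior $\Sigma^{\epsilon'}$ so that Lemma~\ref{lem:total-var} applies. If $\vol(A')$ is below (say) half of $\vol(A\cap\Sigma^{\epsilon'})$, then a constant fraction of points of $A$ escape to $B$ with probability at least $\delta/2$, so $\int_A P_u(B)\,du \ge c\delta\vol(A)$ and $\Phi\ge c\delta/4$. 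The choice $\epsilon'=9r/(20n)$ enters here to guarantee that $A\cap\Sigma^{\epsilon'}$ is itself a constant fraction of $A$, using $\vol(\Sigma)=1$ and the unit-ball containment hypothesis. The analogous case of large $\vol(B\setminus B')$ follows by the reversibility identity $\int_A P_u(B)\,du = \int_B P_v(A)\,dv$.

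In the remaining case both $A'$ and $B'$ are comparable in size to $A$ and $B$. For any $u\in A'$ and $v\in B'$ we have $\abs{P_u-P_v} > 1-\delta$, and the choice $\delta = 9r/(320 e^4 n L_\Omega D_\Sigma)$ is calibrated so that $1-\delta$ exceeds $1-\epsilon'/(8e^4 D_\Sigma)$, the threshold of Lemma~\ref{lem:total-var} invoked with its internal $\epsilon$ set to our $\epsilon'$ (which makes the lemma's internal $\epsilon'$ equal to $G\epsilon'$). The contrapositive of that lemma, combined with Lemma~\ref{lem:view} to discard non-visible pairs, forces
\[
d_\Sigma(u,v) \ge \frac{\epsilon'}{24 D_\Sigma} \qquad \text{or} \qquad \abs{u-v} \ge \min\!\left(\frac{2F(u)}{\sqrt n},\, G\epsilon'\right).
\]
Using \eqref{eq:Fu} with $h = \epsilon'$ gives $2F(u)/\sqrt n \ge \epsilon'/(8\sqrt n)$, which reproduces the inner factor $(2/\sqrt n)(1/(8\sqrt n)\wedge G)$ of the stated bound.

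Next I pull the bad sets back through $g^{-1}$: by bilipschitzness $\widetilde A = g^{-1}(A')$ and $\widetilde B = g^{-1}(B')$ live in $\Omega^{\epsilon'/L_\Sigma}$. For visible pairs, the proof of Lemma~\ref{lem:lipschitz} gives the one-segment bound $d_\Sigma(u,v) \le 4 L_\Sigma^2 L_\Omega^2 R_{\epsilon'}(1+2R_{\epsilon'}) d_\Omega(g^{-1}(u),g^{-1}(v))$, so the cross-ratio alternative converts into $d_\Omega(x,y) \ge N \cdot \epsilon'/(24 D_\Sigma)$ for all $x\in\widetilde A$, $y\in\widetilde B$. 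The geometric alternative transfers via $\abs{x-y}\ge\abs{u-v}/L_\Sigma$ and the elementary $d_\Omega \ge 4\abs{x-y}/D_\Omega$ into a bound of the form $d_\Omega(x,y) \ge 2/(5 n D_\Omega)$. Let $3h^\star$ denote the minimum of these two lower bounds; then $h\equiv h^\star$ satisfies the hypothesis of Theorem~\ref{thm:iso-ineq} on the partition $(\widetilde A, \widetilde B, \Omega\setminus(\widetilde A\cup\widetilde B))$, and measure-preservation of $g$ lifts the isoperimetric conclusion to
\[
\vol(\Sigma\setminus(A'\cup B')) \ge h^\star \min(\vol(A'), \vol(B')) \ge c' h^\star \vol(A)
\]
for a constant $c'>0$. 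Points in this middle set escape to the opposite side with probability at least $\delta/2$, so $\int_A P_u(B)\,du \ge (\delta h^\star/4)\vol(A)$, i.e.\ $\Phi \ge \delta h^\star/4$, matching the stated bound.

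The main obstacle is the \emph{simultaneous} calibration of constants rather than any single step: $\delta$ must fall below the total-variation threshold $\epsilon'/(8 e^4 D_\Sigma)$; the product of $4L_\Sigma^2 L_\Omega^2 R_{\epsilon'}(1+2R_{\epsilon'})$ from Lemma~\ref{lem:lipschitz} and $\epsilon'/(24 D_\Sigma)$ from Lemma~\ref{lem:total-var} must reproduce the factor $N\cdot\epsilon'/(24 D_\Sigma)$ appearing in the answer; the validity condition $R_{\epsilon'}(1+8R_{\epsilon'}) \ge 2/3$ of Lemma~\ref{eq:ineq} must hold at the chosen $\epsilon'$; and the boundary shell $\Sigma\setminus\Sigma^{\epsilon'}$ must occupy only a controlled fraction of any set to keep the easy-case flux bound nontrivial. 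The explicit rational constants $9/320$, $9/20$, $9/80$, $2/5$ that decorate the statement emerge precisely from this bookkeeping, and the extra $L_\Omega$ in $N$ compared to Lemma~\ref{lem:lipschitz} reflects the interior shrinkage of $\widetilde A, \widetilde B$ by a factor of $1/L_\Sigma$ through the bilipschitz map.
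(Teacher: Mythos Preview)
Your overall three-case skeleton is right, but there is a genuine gap in where you impose interiorness, and it is precisely the ``main obstacle'' you flag without resolving.

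You define $A' = \{u\in A\cap\Sigma^{\epsilon'} : P_u(B)<\delta/2\}$ and claim that in the easy case $A\cap\Sigma^{\epsilon'}$ is a constant fraction of $A$. This is false: nothing prevents $A$ from lying entirely inside the shell $\Sigma\setminus\Sigma^{\epsilon'}$, in which case $A'=\emptyset$ and your easy-case flux bound is vacuous. The same problem resurfaces at the end: you assert that points in the ``middle set'' $\Sigma\setminus(A'\cup B')$ escape with probability $\ge\delta/2$, but that set contains the entire shell $\Sigma\setminus\Sigma^{\epsilon'}$, whose points were never tested for escape. So the final inequality $\int_A P_u(B)\,du \ge (\delta h^\star/4)\vol(A)$ does not follow.

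The paper sidesteps this by \emph{not} restricting the bad sets to the interior. It defines $\Sigma_1=\{x\in S_1:P_x(S_2)\le\delta\}$ on all of $S_1$, pulls back the full partition $(\Omega_1,\Omega_2,\Omega_3)=g^{-1}(\Sigma_1,\Sigma_2,\Sigma_3)$, and applies Theorem~\ref{thm:iso-ineq} with a \emph{pointwise-defined} function $h(x)=\tfrac13\min(1,d_\Omega(u(x),v(x)))$, where $(u(x),v(x))$ is the closest collinear pair in $\Omega_1\times\Omega_2$ through $x$. The averaging in Theorem~\ref{thm:iso-ineq} then lets you lower-bound $\bE_\Omega(h)$ by $\tfrac12\bE_{\Omega^\epsilon}(h)$ using only $\vol(\Omega^\epsilon)\ge\tfrac12\vol(\Omega)$, so you need a lower bound on $h(x)$ only for $x\in\Omega^\epsilon$. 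The missing geometric observation is that for such $x$ the pair $(u(x),v(x))$ is automatically tame: either $|u(x)-v(x)|\ge\epsilon/10$ (giving $d_\Omega\ge 2/(5nD_\Omega)$ directly), or both are within $\epsilon/10$ of $x$ and hence lie in $\Omega^{\epsilon'}$ with $\epsilon'=9\epsilon/10$, so their images land in $\Sigma^{\epsilon''}$ with $\epsilon''=\epsilon'/L_\Omega$ and Lemmas~\ref{lem:lipschitz} and~\ref{lem:total-var} apply. This is why the extra $L_\Omega$ appears (from $\Omega^{\epsilon'}\to\Sigma^{\epsilon''}$), not the $1/L_\Sigma$ you wrote. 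Your constant-$h$ application of Theorem~\ref{thm:iso-ineq} cannot reproduce this, because a constant $h$ would require the cross-ratio lower bound to hold for \emph{all} pairs in $\Omega_1\times\Omega_2$, including those near the boundary.
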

The proof is in Appendix~\ref{app:proofs}. In proving this lemma, the non-convexity of $\Sigma$ is specially troubling when points are close to the boundary. We overcome this difficulty by using the isoperimetric inequality shown in Theorem~\ref{thm:iso-ineq}, which is in terms of average distances instead of minimum distances. This enables us to ignore points that are very close to the boundary. 


If we treat $L_\Sigma, L_\Omega, \kappa$ as constants and collect all constants in $C$, we have a $\Phi\ge C/n^3$ lower bound for the conductance. Now we are ready to prove the main theorem. 
\begin{proof}[Proof of Theorem~\ref{thm:sampling:main}]
Using Lemma~\ref{lem:convergence} and Lemma~\ref{lem:s-cond}, $d_{tv}(\sigma_t, \sigma) \le \sqrt{M} ( 1 - C^2/(2 n^6) )^t$, 
which gives the final bound after rearrangement.
\end{proof}

\section{Planning}
\label{sec:experiments}

This section makes an empirical argument for use of the \HNR~in trajectory planning. In the first of two experiments, the state space is a position vector constrained to some map illustrated by the bottom plots of Figure~\ref{fig:position.planning}. The second experiment also includes two dimensions of velocity in the state and limits state transitions to those that respect the map as well as kinematics and requires the planning to control the system explicitly (by specifying an acceleration vector for every time step). We will show that \HNR~outperforms RRT in both cases by requiring fewer transitions to reach the goal state across a wide variety of map difficulties.

\subsection{Position only}
The state starts at the bottom left of the spiral and the goal is the top right. Both algorithms are implemented as described in the introduction. The number of tranitions needed to reach the goal of both algorithms is plotted as a function of the width of the spiral arms; the larger the width, the easier the problem. 

The results are presented in Figure~\ref{fig:position.planning}. The top plot show the number of transitions needed by both algorithms as the width of the arms changes, averaged over 500 independent runs. We see that the \HNR~outperforms RRT for all but the hardest problems, usually by a large margin. The two lower plots show the sample points produced from one run with width equal to 1.2; we see that RRT has more uniform coverage, but that \HNR~has a large speedup over linear sections, therefore justifying its faster exploration.
\begin{figure}[t]
  \centering
  \includegraphics[width=.47\textwidth,natwidth=610,natheight=642]{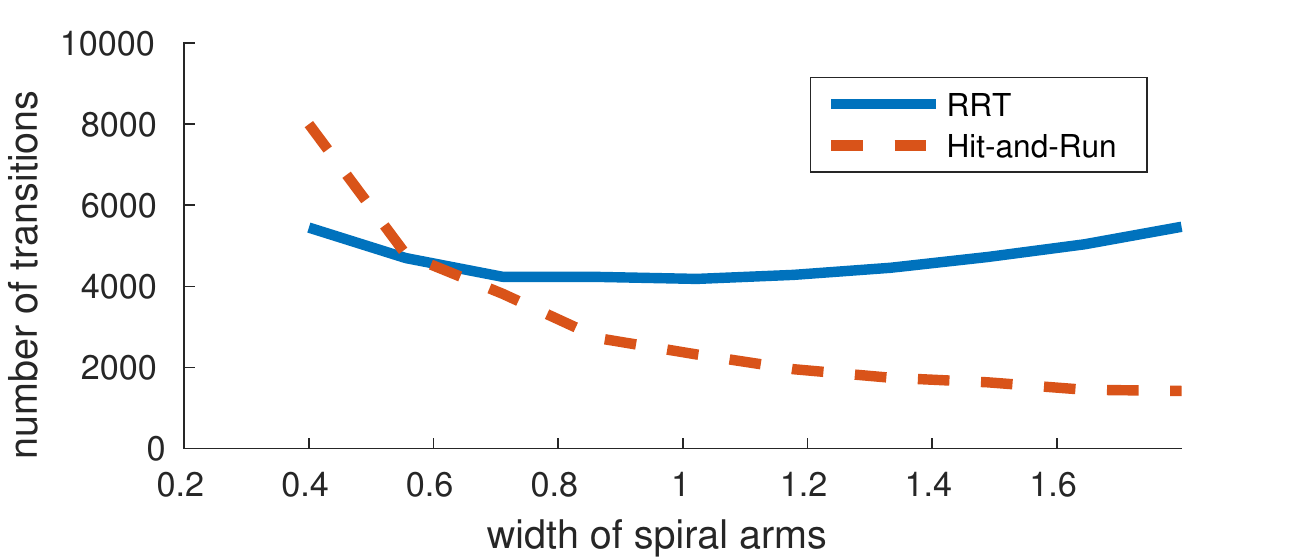}
  \includegraphics[clip=true, trim=1cm 0cm 0cm 0cm, width=.5\textwidth,natwidth=610,natheight=642]{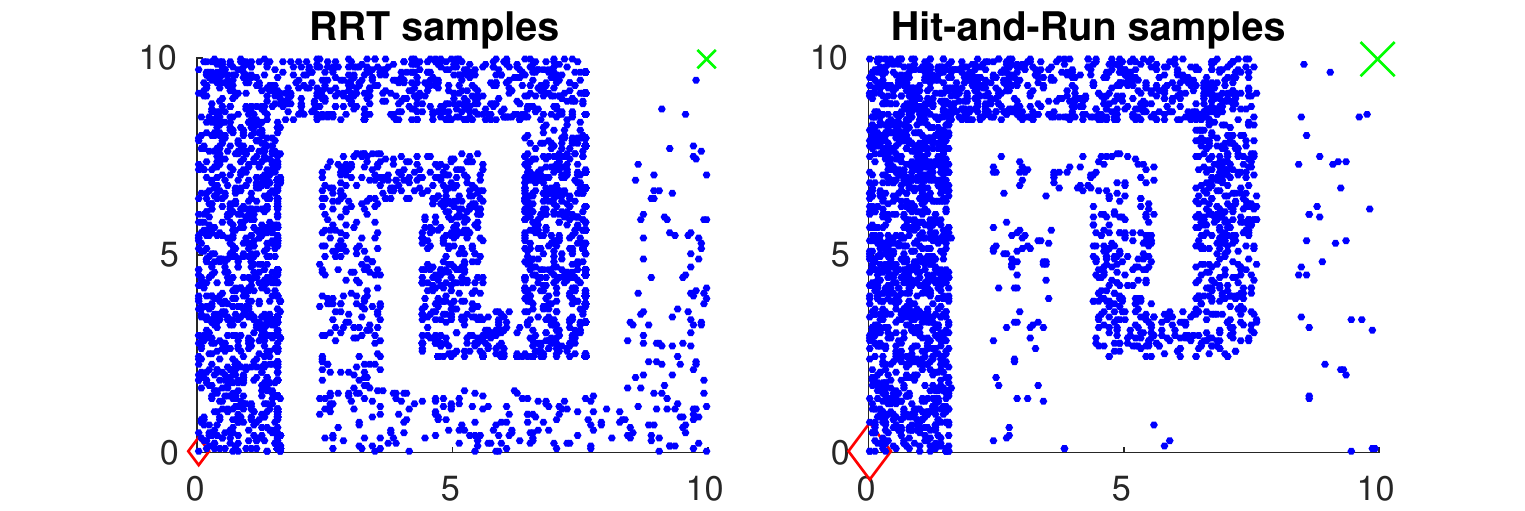}
  \caption{Position only planning example}
  \label{fig:position.planning}
\end{figure}

RRT is slow in this problem because in many rounds the tree does not grow in the
\textit{right} direction. For example at the beginning the tree needs
to grow upwards, but most random samples will bias the growth to
right. As \HNR~only considers the space that is visible to the current
point, it is less sensitive to the geometry of the free space. We can
make this problem arbitrarily hard for RRT by making the middle part
of the spiral fatter. \HNR, on the other hand, is insensitive to such
changes. Additionally, the growth of the RRT tree can become very slow towards the end. This is because the rest of the tree absorbs most samples, and the tree grows only if the random point falls in the vicinity of the goal.

\subsection{Kinematic Planning}
In this set of simulations, we constrain the state transitions to adhere to the laws of physics: the state propagates forward under kinematics until it exits the permissible map, in which case it stops inelastically at the boundary. The position map is the two-turn corridor, illustrated in the bottom plots of Figure~\ref{fig:kinematic.planning}. Both algorithms propose points to in the analogous manner to the previous section (where a desired speed is sampled in addition to a desired position); then, the best acceleration vector in the unit ball is calculated and the sample is propagated forward by the kinematics. If the sample point encounters the boundary, the velocity is zeroed. Both RRT and \HNR~are constrained to use the same controller and the only difference is what points are proposed. 
\begin{figure}
  \centering
  \includegraphics[clip=true, trim=0cm 0cm .5cm 0cm, width=.47\textwidth,natwidth=610,natheight=642]{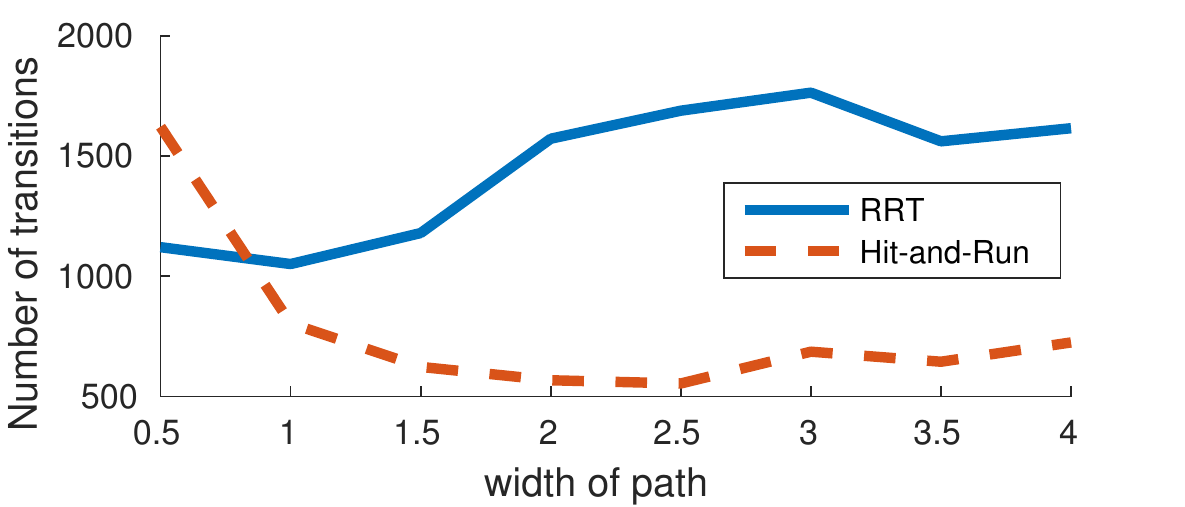}
  \includegraphics[clip=true, trim=1cm 0cm 0cm 0cm, width=.5\textwidth,natwidth=610,natheight=642]{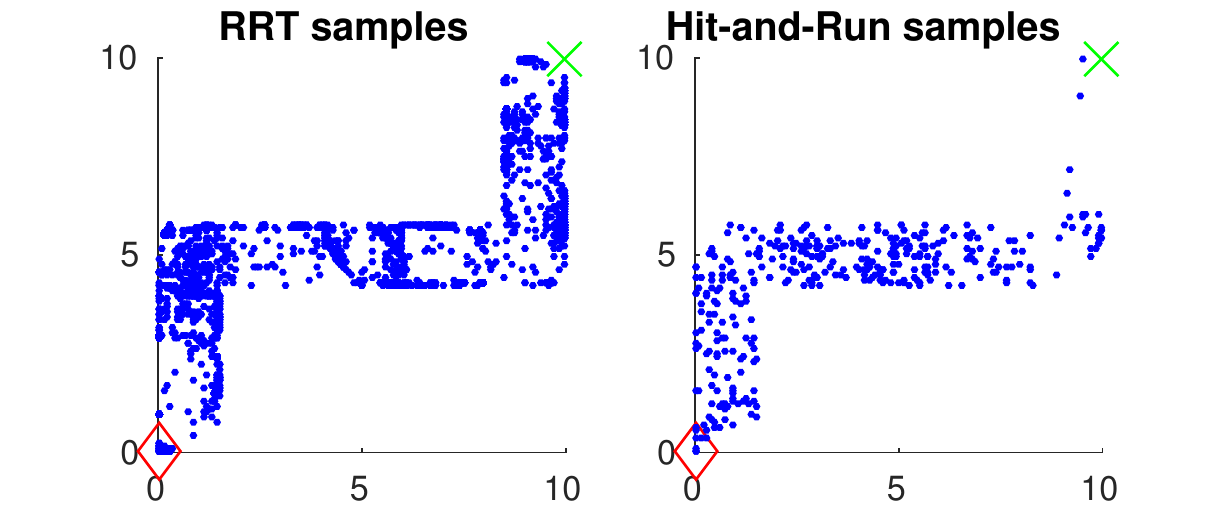}
  \vspace{-.5cm}
  \caption{Performance under kinematic constraints}
  \label{fig:kinematic.planning}
\end{figure}

We see that \HNR~again outperforms RRT across a large gamut of path widths by as much as a factor of three. The bottom two plots are of a typical sample path, and we see that \HNR~has two advantages: it accelerates down straight hallways, and it samples more uniformly from the state space. In contrast, RRT wastes many more samples along the boundaries.

\section{Conclusions and Future Work}
This paper has two main contributions. First, we use a
measure-preserving bilipschitz map to extend the analysis of the \HNR~random walk to non-convex sets. Mixing time bounds for non-convex sets open up many applications, for example non-convex optimization via simulated annealing and similar methods. The second contribution of this paper has been to study one such application: the planning problem.

In contrast to RRT, using \HNR~for planning has stronger guarantees on
the number of samples needed and faster convergence in some cases. It
also avoids the need for a sampling oracle for $\Sigma$, since it
combines the search with an approximate sampling oracle. One drawback
is that the sample paths for \HNR~have no pruning and are therefore
 longer than the RRT paths. Hybrid approaches that yield short
paths but also explore quickly are a promising future direction.

\newpage
\bibliography{biblio}

\newpage
\onecolumn
\appendix

\section{Proofs}
\label{app:proofs}

\begin{figure*}
\centering

\begin{subfigure}[t]{0.4\textwidth}
  \centering
  \includegraphics[width=\textwidth,natwidth=610,natheight=642]{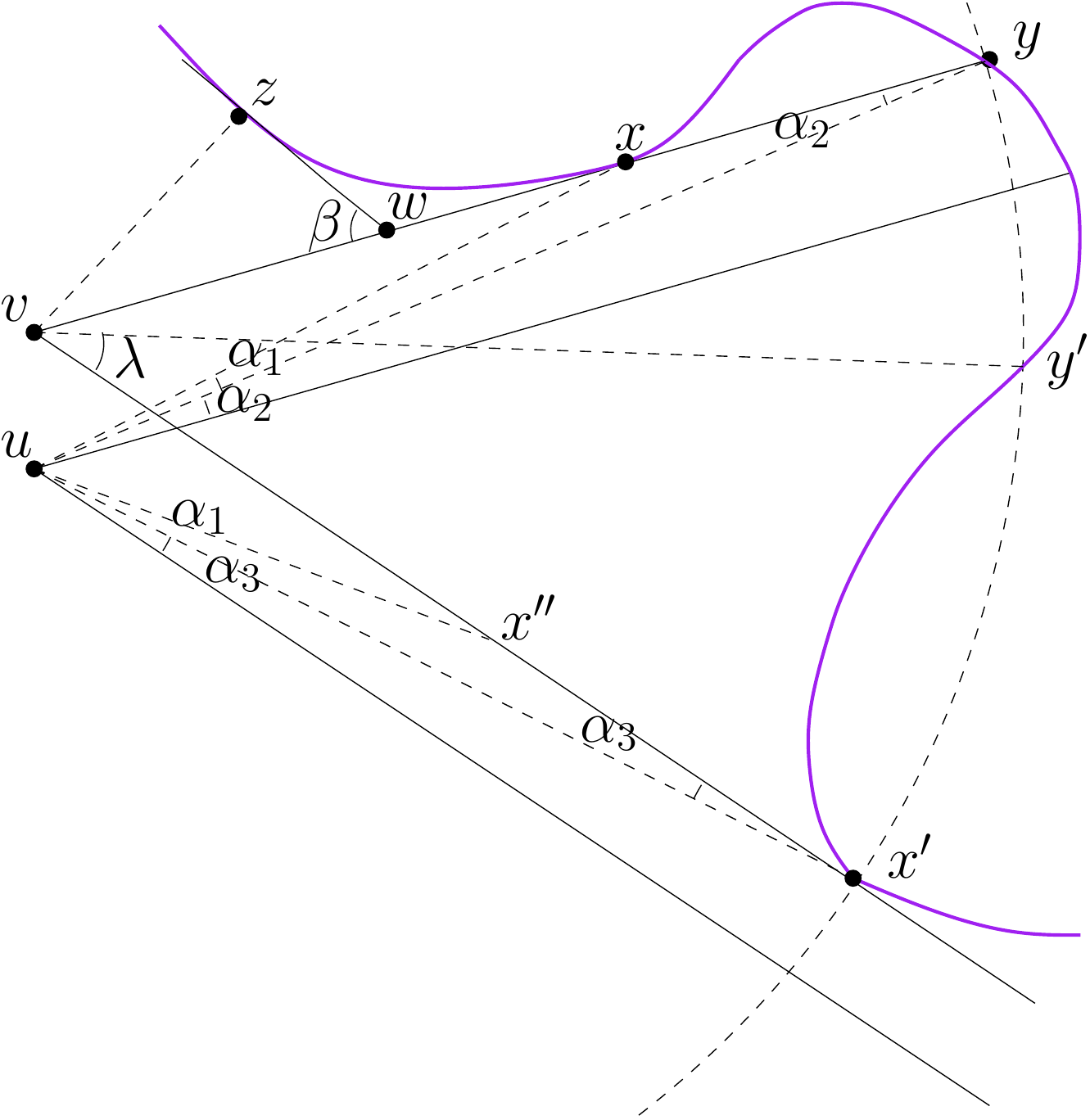}
    \caption{The geometry used in Lemma~\ref{lem:view}.}
  \end{subfigure}
  ~
  \begin{subfigure}[t]{0.4\textwidth}
  \centering
  \includegraphics[width=\textwidth,natwidth=610,natheight=642]{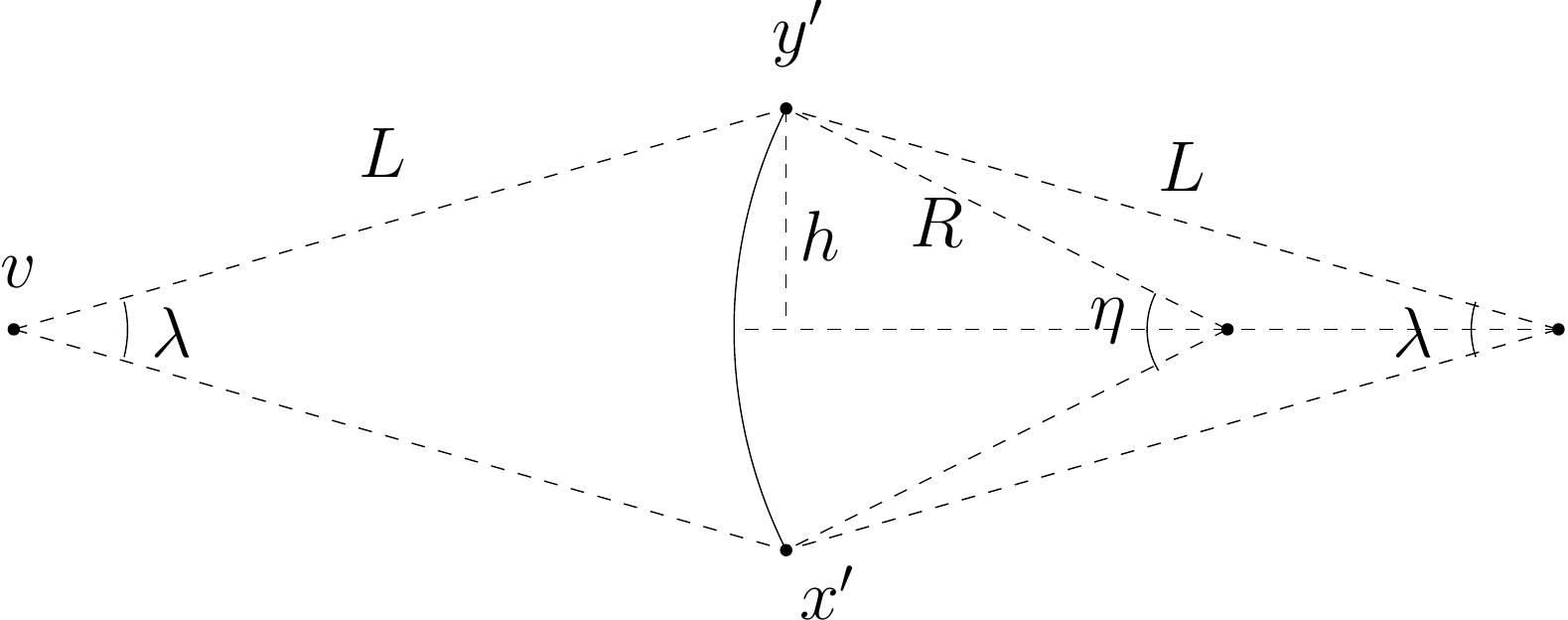} 
    \caption{The geometry used in Lemma~\ref{lem:view}.}
  \end{subfigure}
  
\caption{If two points are close to each other, they have similar views.}  
\label{fig:visible}
\end{figure*}

\begin{proof}[Proof of Lemma~\ref{lem:view}]
We say a line segment $L$ is not fully visible from a point $x$ if
there exists a point on the line segment that is not visible from $x$.
We denote this event by $L\notin \view(x)$. Let $L$ be a line segment
chosen by \HNR~from $u$. So, as the next point in the Markov chain,
\HNR~chooses a point uniformly at random from $L$. We know that 
\[
P_u(\{x : x\notin \view(v)\}) \le P_u(\{L : L\notin \view(v)\})\,,
\] 
So it suffices to show
\beq
\label{eq:line-visible}
P_u(\{L : L\notin \view(v)\}) \le  \max\left( \frac{4}{\pi},  \frac{\kappa }{\sin(\pi/8)} \right) \frac{\epsilon'}{\epsilon} \;.
\eeq
To sample the line segment $L$, first we sample a random two dimensional plane containing $u$ and $v$, and then sample the line segment inside this plane. To prove \eqref{eq:line-visible}, we show that in any two dimensional plane containing $u$ and $v$, the ratio of invisible to visible region is bounded by $\max\left( \frac{4}{\pi},  \frac{\kappa }{\sin(\pi/8)} \right) \frac{\epsilon'}{\epsilon}$.\todov{be sure the plane argument is not fishy}

Consider the geometry shown in Figure~\ref{fig:visible}(a). Let $\cH$ be the intersection of $\partial\Sigma$ and a two dimensional plane containing $u$ and $v$. For a line $\ell$ and points $q$ and $u$, we write $[q,\ell,u]$ to denote that $u$ and a small neighborhood of $q$ on $\cH$ are on the opposite sides of $\ell$. For example, in Figure~\ref{fig:visible}(a), we have that $[x,\ell(v,x),u]$. Define a subset 
\[
Q = \{q\in\cH\,:\, \ell(v,q) \mbox{ is tangent to } \cH \mbox{ at } q \mbox{ and } [q,\ell(v,q),u]\} \;.
\]
Any line $\ell(v,q)$ such that $[q,\ell(v,q),u]$ creates some space that is visible to $u$ and invisible to $v$. If $Q$ is empty, then the entire $\cH$ is in the view of $v$ and $P_u(\{x : x\notin \view(v)\}) = 0$.\todov{Can you make this claim formal?} Otherwise, let $x$ be a member of $Q$. Let $y\in\cH$ be the closest point to $x$ such that $[v,y]$ is tangent to $\cH$ at $x$. Let $\alpha_1$  be the angle between $[x,u]$ and $[u,y]$, and let $\alpha_2$ be the angle between $[y,v]$ and $[y,u]$. Because $\abs{u-v}\le \abs{v-z}\le \abs{v-x}$, $\alpha_1+\alpha_2\le \pi/2$. Further, if the lengths of $\abs{u-v}$ and $\abs{v-x}$ are fixed, $\alpha_1+\alpha_2$ is maximized when $[v,u]$ is orthogonal to $[u,x]$. If $x$ is the only member of $Q$, then maximum invisible angle is $\alpha_1$, which can be bounded as follows: 
\[
\sin\alpha_1 \le \sin(\alpha_1 + \alpha_2) \le \frac{\abs{u-v}}{\abs{v-x}} \le \frac{\abs{u-v}}{\epsilon}\;.
\]
Otherwise, assume $Q$ has more members. The same upper bound holds for members that are also on the line $\ell(v,x)$. So next we consider members of $Q$ that are not on the line $\ell(v,x)$. Assume $Q$ has only one such member and let $x'$ be that tangent point (see Figure~\ref{fig:visible}(a). The same argument can be repeated if $Q$ has more such members). We consider two cases. \textbf{Case 1:} $\abs{v-x'} \ge \abs{v-y}$. Let $\alpha_3$ be the angle between $[v,x']$ and $[u,x']$. If $\alpha_3 \le \alpha_2$, then
\[
\sin (\alpha_1+\alpha_3) \le \sin (\alpha_1+\alpha_2) \le  \frac{\abs{u-v}}{\epsilon} \;. 
\] 
Otherwise, $\alpha_3 > \alpha_2$. Consider point $x''$ such that the angle between $[u,x']$ and $[u,x'']$ is $\alpha_1$. We show that $\abs{v-x''} \ge \abs{v-x}$ by contradiction. Assume $\abs{v-x} > \abs{v-x''}$. Thus, $\abs{x'-x''} > \abs{y-x}$ and $\abs{u-x} > \abs{u-x''}$. By law of sines, $\abs{x-y}/\sin \alpha_1 = \abs{u-x}/\sin \alpha_2$ and $\abs{x''-x'}/\sin \alpha_1 = \abs{u-x''}/\sin \alpha_3$. Because $\abs{u-x} > \abs{u-x''}$ and $\alpha_3 > \alpha_2$, we have that $\abs{u-x}/\sin \alpha_2 > \abs{u-x''}/\sin \alpha_3$, and thus $\abs{x''-x'}/\sin \alpha_1 < \abs{x-y}/\sin \alpha_1$. This implies $\abs{x''-x'} < \abs{x-y}$, a contradiction. \todov{add argument to have u on the right side of v}\todoy{the new definition of $Q$ ensures that $u$ is in the right side.}
Thus, 
\[
\sin (\alpha_1+\alpha_3) \le \frac{\abs{u-v}}{\abs{v-x''}} \le \frac{\abs{u-v}}{\abs{v-x}} \le \frac{\abs{u-v}}{\epsilon}\;. 
\]  

Next we consider the second case. \textbf{Case 2:} $\abs{v-x'} < \abs{v-y}$. Consider the arc on $\cH$ from $y$ to $x'$. Let $y'$ be the last point on this arc such that $\abs{v-y'} = \abs{v-y}$. Let $\eta$ be the change of angle between the tangent of $\cH$ at $y'$ and the tangent of $\cH$ at $x'$ (tangents are defined in clockwise direction), and let $\lambda$ be the angle between $[v,y']$ and $[v,x']$. Angle $\eta$ is minimized
\todoa{I don't see why this is true}\todoy{Notice that $y'$ is the last point on this arc such that $\abs{v-y'} = \abs{v-y}$. So the angle between $[y',v]$ and the tangent at $y'$ is less than $\pi/2$.}\todov{define angles between lines properly as two opposite choices can be made } when the tangent at $y'$ is orthogonal to $[v,y']$. Thus $\eta \ge \pi/2 - \lambda$. If $\lambda < \pi/4$, then $\eta \ge \pi/4$. Angle $\lambda$ is smallest when the arc from $y'$ to $x'$ changes with maximum curvature $\kappa/\cR_\cH$, i.e. it is a segment of a circle with radius $\cR_\cH/\kappa$. Figure~\ref{fig:visible}(b) shows this case, where $R=\cR_\cH/\kappa$ and $L = \abs{v-y'}$. We have that
\[
\frac{\sin(\lambda/2)}{\sin(\eta/2)} \ge \frac{h/L}{h/R} = \frac{R}{L} = \frac{\cR_\cH}{\kappa \abs{v-y}}
\]
Thus, 
\[
\frac{\lambda}{2} \ge \sin (\lambda/2) \ge \frac{\cR_\cH}{\kappa \abs{v-y}} \sin(\eta/2) = \frac{\sin(\pi/8) \cR_\cH}{\kappa \abs{v-y}} \ge \frac{\sin(\pi/8)}{\kappa}  \,,
\]
where the last step follows by $\abs{v-y} \le \cR_\cH$. Thus 
\[
\lambda \ge \lambda_0 \eqdef \min\left(\frac{\pi}{4}, \frac{\sin(\pi/8)}{\kappa} \right) \;.
\]
So for every $\abs{u-v}/\epsilon$ invisible region, we have at least $\lambda_0$ visible region. Thus,
\[
P_u(\{x : x\notin \view(v)\}) \le \frac{\abs{u-v}}{\lambda_0 \epsilon} = \max\left( \frac{4}{\pi},  \frac{\kappa }{\sin(\pi/8) } \right) \frac{\abs{u-v}}{\epsilon} \;.
\]
\end{proof}

Before proving Lemma~\ref{lem:total-var}, we show a useful inequality. Consider points $u,v,w\in\Sigma$ that see each other. Let $\cC$ be the convex hull of 
\[
\{a(u,v),b(u,v),a(u,w),b(u,w),a(v,w),b(v,w)\} \;. 
\]
Let $i$ and $j$ be distinct members of $\{u,v,w\}$. We use $a'(i,j)$ and $b'(i,j)$ to denote the endpoints of $\ell_{\cC}(i,j)$ that are closer to $i$ and $j$, respectively. Because $\abs{a'(i,j)-b'(i,j)}$ is convex combination of two line segments that are inside $\Sigma$, 
\beq
\label{eq:chull-bounded}
\abs{a'(i,j)-b'(i,j)} \le D_\Sigma \;. 
\eeq
Also $[a(i,j), b(i,j)]\subset [a'(i,j), b'(i,j)]$, and thus $\abs{a(i,j)-i} \le \abs{a'(i,j)-i}$ and $\abs{b(i,j)-j} \le \abs{b'(i,j)-j}$. We can write
\begin{align}
\notag
\ell_\cC(i,j) &= \frac{\abs{i-j}\cdot\abs{a'(i,j)-b'(i,j)}}{\abs{a'(i,j)-i}\cdot\abs{j-b'(i,j)}} \\
\notag
&\le \frac{\abs{i-j}\cdot\abs{a(i,j)-b(i,j)}}{\abs{a(i,j)-i}\cdot\abs{j-b(i,j)}} \cdot \frac{\abs{a'(i,j)-b'(i,j)}}{\abs{a(i,j)-b(i,j)}} \\
\label{eq:convex-hull-sigma}
&\le D_\Sigma \frac{\ell_\Sigma(i,j)}{d(i,\partial\Sigma)} \,,
\end{align}
where the last inequality holds because $\abs{a(i,j)-b(i,j)} \ge d(i,\partial\Sigma)$.
\begin{proof}[Proof of Lemma~\ref{lem:total-var}]

Let $A\subset \Sigma$ be a measurable subset of $\Sigma$. We prove that
\[
P_u(A) - P_v(A) \le 1- \frac{\epsilon}{8 e^4 D_\Sigma} \;.
\]
\todoa{needs $|\cdot|$?}
\todoy{no!}
We partition $A$ into five subsets, and estimate the probability of each of them separately:
\begin{align*}
A_1 &= \left\{ x\in A\, :\, \abs{x-u} < F(u) \right\} \,,\\
A_2 &= \left\{ x\in A\, :\, \abs{(x-u)^\top (u-v)} > \frac{1}{\sqrt{n}} \abs{x-u}\cdot \abs{u-v} \right\}\,,\\
A_3 &= \Bigg\{ x\in A\, :\, \abs{x-u} <  \frac{1}{6} \abs{u-a(u,x)}\,,\\ 
&\qquad\qquad\qquad\mbox{ or } \abs{x-u} <  \frac{1}{6} \abs{u-a(x,u)} \Bigg\} \,,\\
A_4 &= \left\{ x\in A\, :\, x\in \view(u),\, x\notin \view(v) \right\}\,,\\
S&= A\setminus A_1\setminus A_2\setminus A_3\setminus A_4\;.
\end{align*}

The definition of $F(u)$ immediately yields $P_u(A_1)\le 1/8$. Now consider $A_2$ and let $C$ be the cap of the unit sphere centered at $u$ in the direction of $v$, defined by
$C= \{x:(u-v)^\top x\geq\frac{1}{\sqrt{n}}|u-v|\}$. If $x\sim P_u$, then $P(x\in A_2)$ is bounded above by the probability that a uniform random line through $u$ intersects $C$, which has probability equal to the ratio between the surface of $C$ and the surface of the half-sphere. A standard computation to show that this ratio is less than $1/6$, and hence $P_u(A_2)\leq 1/6$. The probability that $x\in A_3$ is at most $1/6$, since $x$ is chosen from a segment of a chord of length at most $|\ell(u,x)|/6$. Finally, to bound $P(A_4)$, we apply Lemma~\ref{lem:view}:
\begin{equation*}
  P_u\left( x\in A\, :\, x\in \view(u),\, x\notin \view(v) \right) \le \max\left( \frac{4}{\pi},  \frac{\kappa }{\sin(\pi/8)} \right)
  \frac{\epsilon'}{\epsilon}
  \le \frac{1}{6} \;.
\end{equation*}
The combined probability of $A_1,A_2,A_3$, and $A_4$ is at most $1/8 + 1/6 + 1/6 + 1/6 < 3/4$.

We now turn to bounding $P_u(S)$ and show that $P_u(S) \le 2 e^4 (D_\Sigma/\epsilon) P_v(S)$. Because points in $S$ are visible from both $u$ and $v$, by \eqref{eq:pdf} 
\[
P_v(S) = \frac{2}{n \pi_n} \int_S \frac{1}{\ell_\Sigma(v,x) \abs{x-v}^{n-1}} \;.
\] 
Now, any $x\in S$ must respect the following
\begin{align}
\label{eq:eq11}
\abs{x-u} &\ge F(u) \ge \frac{\sqrt{n}}{2} \abs{u-v}\,,\\
\label{eq:eq22}
\abs{(x-u)^\top (u-v)} &\le \frac{1}{\sqrt{n}} \abs{x-u}\cdot \abs{u-v}\,,\\
\label{eq:eq33}
\abs{x-u} &\geq  \frac{1}{6} \abs{u-a(u,x)},\text{ and } \\
\label{eq:eq44}
\abs{x-u} &\geq \frac{1}{6} \abs{u-a(x,u)} \;.
\end{align}
\begin{figure}
\begin{center}	
\includegraphics[scale=.8,natwidth=610,natheight=642]{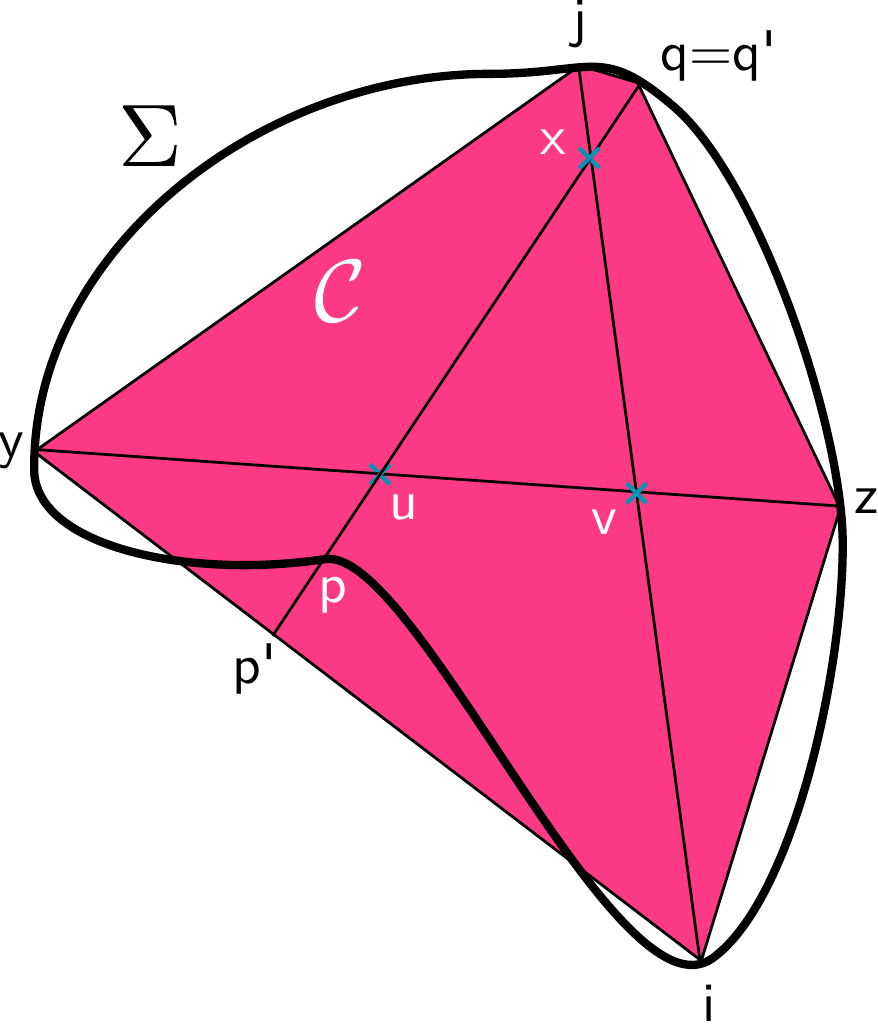}
\end{center}
\caption{Illustration for Lemma 12 proof}
\label{fig:lemma12}
\end{figure}
As illustrated in Figure~\ref{fig:lemma12}, we define the points $y = a(u, v)$, $z = a(v, u)$, $p = a(u, x)$, $q = a(x, u)$, $i = a(v, x)$ and $j = a(x, v)$ with convex hull $\cC$. Also let $p'$ and $q'$ be the endpoints of $\ell_\cC(u,x)$. If $p'=p$ and $q'=q$, we proceed with the argument in the  proof of Lemma~9 of \cite{Lovasz-1999} to get the desired result. Otherwise, assume $q'=q$ and $p'$ is the intersection of the lines $\ell(u,p)$ and $\ell(y,i)$. (See Figure~\ref{fig:lemma12}. A similar argument holds when $q\neq q'$.) From \eqref{eq:eq33} and \eqref{eq:eq44}, we get that 
\[
2 \abs{x-u} > \frac{1}{6}  \abs{p-q} \;.
\]
We have that $\abs{p-q} \ge \epsilon$, and by \eqref{eq:chull-bounded}, $\abs{p' - q'}\le D_\Sigma$. Thus $\abs{p' - q'} \le (D_\Sigma/\epsilon) \abs{p - q}$. Thus,
\beq
\label{eq:xupq}
\frac{1}{6} \abs{p' - q'} \le \frac{2D_\Sigma}{\epsilon} \abs{x-u} \;.
\eeq
To relate $P_v(S)$ to $P_u(S)$, we need to bound $\abs{x-v}$ and $\ell(x,v)$ in terms of $\abs{x-u}$ and $\ell(x,u)$:
\begin{align*}
\abs{x-v}^2 &= \abs{x-u}^2 + \abs{u-v}^2 + 2 (x-u)^\top (u-v) \\
&\le \abs{x-u}^2 + \abs{u-v}^2 + \frac{2}{\sqrt{n}} \abs{x-u}\cdot \abs{u-v} & \text{\dots By \eqref{eq:eq22}}\\
&\le \abs{x-u}^2 + \frac{4}{n} \abs{x-u}^2 + \frac{4}{n} \abs{x-u}^2 & \text{\dots By \eqref{eq:eq11}} \\
&= \left( 1 + \frac{8}{n} \right) \abs{x-u}^2 \;.
\end{align*}
Thus,
\beq
\label{eq:xvxu}
\abs{x-v} \le \left( 1 + \frac{4}{n} \right) \abs{x-u} \;.
\eeq
First we use convexity of $\cC$ to bound $\ell_\cC(x,v)$ in terms of $\ell_\cC(x,u)$, and then we use \eqref{eq:convex-hull-sigma} to bound $\ell_\Sigma(x,v)$ and $\ell_\Sigma(x,u)$ in terms of $\ell_\cC(x,v)$ and $\ell_\cC(x,u)$. 
By Menelaus' Theorem (wrt triangle $uvx$ and transversal line $[y,i]$),
\[
\frac{\abs{x-i}}{\abs{v-i}} = \frac{\abs{u-y}}{\abs{v-y}} \cdot \frac{\abs{x-p'}}{\abs{u-p'}} \;.
\]
We have that
\[
\frac{\abs{u-y}}{\abs{v-y}} = 1 -  \frac{\abs{v-u}}{\abs{v-y}} > 1 - d_\cC(u,v)\,,
\]
and thus
\begin{align*}
 \frac{\abs{x-v}}{\abs{v-i}} &=  \frac{\abs{x-i}}{\abs{v-i}} - 1\\
 &\ge (1-d_\cC(u,v))  \frac{\abs{x-p'}}{\abs{u-p'}} - 1\\
 &=  \frac{\abs{x-u}}{\abs{u-p'}} \left( 1 - d_\cC(u,v)  \frac{\abs{x-p'}}{\abs{x-u}} \right)\\
 &>  \frac{\abs{x-u}}{\abs{u-p'}} \left( 1 - d_\cC(u,v)  \frac{\abs{p'-q'}}{\abs{x-u}} \right)\\
 &>  \frac{\abs{x-u}}{\abs{u-p'}} \left( 1 - \frac{12 D_\Sigma \epsilon}{24 D_\Sigma \epsilon}  \right)\\
 &> \frac{1}{2}  \frac{\abs{x-u}}{\abs{u-p'}}\,,
 \end{align*}
where we have used \eqref{eq:xupq} and $d_\cC(u,v) = d_\Sigma(u,v) < \epsilon/(24 D_\Sigma)$ (the condition in the statement of the lemma); we conclude that 
\begin{equation}
  \label{eq:v.i.bound}
\abs{v-i} < 2 \frac{\abs{x-v}}{\abs{x-u}} \abs{u-p'}  \;.
\end{equation}

Next we prove a similar inequality for $\abs{v-j}$. It is easy to check that
\[
\frac{\abs{z-v}}{\abs{u-z}} = 1 -  \frac{\abs{u-v}}{\abs{u-z}} > 1 - d_\cC(u,v)\,,
\]
and combining with Menelaus' Theorem 
\[
\frac{\abs{v-j}}{\abs{x-j}} = \frac{\abs{q'-u}}{\abs{x-q'}} \cdot \frac{\abs{z-v}}{\abs{u-z}}
\]
we can show 
\begin{align*}
 \frac{\abs{x-v}}{\abs{x-j}} &=  \frac{\abs{v-j}}{\abs{x-j}} - 1\\
 &\ge (1-d_\cC(u,v))  \frac{\abs{q'-u}}{\abs{x-q'}} - 1\\
 &=  \frac{\abs{x-u}}{\abs{x-q'}} \left( 1 - d_\cC(u,v)  \frac{\abs{q'-u}}{\abs{x-u}} \right)\\
 &>  \frac{\abs{x-u}}{\abs{x-q'}} \left( 1 - d_\cC(u,v)  \frac{\abs{p'-q'}}{\abs{x-u}} \right)\\
 &>  \frac{\abs{x-u}}{\abs{x-q'}} \left( 1 - \frac{12 D_\Sigma \epsilon}{24 D_\Sigma \epsilon} \right)\\
 &> \frac{1}{2}  \frac{\abs{x-u}}{\abs{x-q'}}\,,
 \end{align*}
 where we have used \eqref{eq:xupq} and $d_\cC(u,v) = d_\Sigma(u,v) < \epsilon/(24 D_\Sigma)$. Thus,
 \[
 \abs{x-j} < 2 \frac{\abs{x-v}}{\abs{x-u}} \abs{x-q'}  \;,
\]
and combining this with the trivial observation that 
$
\abs{x-v} \le 2 \frac{\abs{x-v}}{\abs{x-u}} \abs{x-u}\,,
$
and Equation~\ref{eq:v.i.bound} yields
\[
  \ell_\cC(x,v) =
  \abs{v-i}+\abs{v-x}+\abs{x-j}
  \le
  2 \frac{\abs{x-v}}{\abs{x-u}} \ell_\cC(x,u)\;.
\]
Thus,
\begin{align}
\notag
\ell_\Sigma(x,v) &= \ell_\cC(x,v)\\ 
\notag
&\le 2 \frac{\abs{x-v}}{\abs{x-u}} \ell_\cC(x,u)\\ 
\label{eq:lxvlxu}
&\le \frac{2 D_\Sigma}{\epsilon} \frac{\abs{x-v}}{\abs{x-u}}\ell_\Sigma(x,u)\;.
\end{align}
Where the last step holds by \eqref{eq:convex-hull-sigma}. Now we are ready to lower bound $P_v(S)$ in terms of $P_u(S)$.
\begin{align*}
P_v(S) &= \frac{2}{n \pi_n} \int_S \frac{dx}{\ell_\Sigma(x,v) \abs{x-v}^{n-1}}\\
&\ge \frac{\epsilon}{n \pi_n D_\Sigma} \int_S \frac{\abs{x-u} dx}{\ell_\Sigma(x,u) \abs{x-v}^{n}} & \text{\dots By \eqref{eq:lxvlxu}}\\
&\ge \frac{\epsilon}{n \pi_n D_\Sigma} \left( 1 + \frac{4}{n} \right)^{-n} \int_S \frac{dx}{\ell_\Sigma(x,u) \abs{x-u}^{n-1}} & \text{\dots By \eqref{eq:xvxu}}\\
&\ge \frac{\epsilon}{2 e^4 D_\Sigma} P_u(S) \;. 
\end{align*} 
Finally,
\begin{align*}
P_u(A) - P_v(A) &\le P_u(A) - P_v(S) \\
&\le P_u(A) - \frac{\epsilon}{2 e^4  D_\Sigma} P_u(S)\\ 
&\le P_u(A) - \frac{\epsilon}{2 e^4  D_\Sigma} \left( P_u(A) - \frac{3}{4} \right)\\ 
&= \frac{3\epsilon}{8 e^4  D_\Sigma} + \left( 1 - \frac{\epsilon}{2 e^4 D_\Sigma}  \right) P_u(A) \\
&\stackrel{(a)}{\le} \frac{3\epsilon}{8 e^4  D_\Sigma} + 1 - \frac{4\epsilon}{8 e^4 D_\Sigma} \\
&= 1 - \frac{\epsilon}{8 e^4  D_\Sigma} \;. 
\end{align*}
In the step (a), we used the fact that $D_\Sigma\ge \epsilon$ and $P_u(A)\leq 1$.
\end{proof}

\begin{proof}[Proof of Lemma~\ref{lem:s-cond}]
Let $\{S_1,S_2\}$ be a partitioning of $\Sigma$. Define 
\begin{align*}
\Sigma_1 &= \left\{x\in S_1\ :\  P_{x}(S_2) \le \delta \right\} \,,\\ 
\Sigma_2 &= \left\{x\in S_2\ :\  P_{x}(S_1) \le \delta \right\}\,, \\
\Sigma_3 &= \Sigma\setminus \Sigma_1\setminus \Sigma_2 \;.
\end{align*}
\textbf{Case 1:} $\vol(\Sigma_1) \le \vol(S_1)/2$. We have that
\begin{align*}
\int_{S_1} P_{x}(S_2) dx &\ge \int_{S_1\setminus \Sigma_1} P_{x}(S_2) dx \ge \delta \vol(S_1\setminus \Sigma_1) \ge \frac{\delta}{2} \vol(S_1) \;.
\end{align*}
Thus,
\begin{align*}
\frac{1}{\min\{\vol(S_1), \vol(S_2)\}}\int_{S_1} P_{x}(S_2) dx &\ge \frac{\delta}{2} \;.
\end{align*}
\textbf{Case 2:} $\vol(\Sigma_1) > \vol(S_1)/2$ and $\vol(\Sigma_2) > \vol(S_2)/2$. Similar to the argument in the previous case, 
\[
\int_{S_1} P_{x}(S_2) \ge \delta\, \vol(S_1\setminus \Sigma_1)\,,
\]
and
\[
\int_{S_1} P_{x}(S_2)  = \int_{S_2} P_{x}(S_1) \ge \delta\, \vol(S_2\setminus \Sigma_2)\;.
\]
Thus,
\[
\int_{S_1} P_{x}(S_2) \ge \frac{\delta}{2}\, \vol(\Sigma\setminus\Sigma_1\setminus\Sigma_2) =  \frac{\delta}{2}\, \vol(\Sigma_3)\;.
\]
Let $\Omega_i=g^{-1}(\Sigma_i)$ for $i=1,2,3$. Define
\[
(u(x),v(x)) = \argmin_{u\in\Omega_1,v\in\Omega_2, \{u,v,x\} \text{ are collinear}} d_\Omega(u,v)\,,\qquad h(x)=(1/3) \min(1, d_\Omega(u(x), v(x))) \;.
\]
By definition, $h(x)$ satisfies condition of Theorem~\ref{thm:iso-ineq}. Let $\epsilon = \frac{r}{2n}$ and notice that $\vol(\Omega^\epsilon) \ge \vol(\Omega)/2$. We have that
\begin{align*}
\int_{S_1} P_{x}(S_2) &\ge \frac{\delta}{2}\, \vol(\Omega_3) \\
&\ge \frac{\delta}{2} \bE_\Omega(h(x)) \min(\vol(\Omega_1), \vol(\Omega_2)) \\
&= \frac{\delta}{4} \bE_{\Omega^\epsilon}(h(x)) \min(\vol(\Sigma_1), \vol(\Sigma_2)) \;.
\end{align*}
Let $x\in\Omega^\epsilon$. We consider two cases. \underline{In the first case}, $\abs{u(x) - v(x)} \ge \epsilon/10$. Thus, 
\[
d_\Omega(u(x), v(x)) \ge \frac{4}{D_\Omega} \abs{u(x)-v(x)} \ge \frac{2}{5 n D_\Omega} \;.
\]
\underline{In the second case}, $\abs{u(x)-v(x)}< \epsilon/10$, then $\abs{u(x)-x}\le \epsilon/10$ and $\abs{v(x)-x}\le \epsilon/10$. Thus, $u,v\in\Omega^{\epsilon'}$ for $\epsilon'=9\epsilon/10$. Thus by Assumption~\ref{ass:mapping}, $g(u), g(v) \in \Sigma^{\epsilon''}$ for $\epsilon'' = 9\epsilon/(10 L_\Omega)$. By Lemma~\ref{lem:lipschitz}, 
\[
d_\Omega(u(x), v(x)) \ge \frac{\widetilde d_\Sigma(g(u(x)), g(v(x)))}{4 L_\Sigma^2 L_\Omega^2 R_{\epsilon'} (1+2 R_{\epsilon'})} \;.
\]
Next we lower bound $\widetilde d_\Sigma(g(u), g(v))$. If $g(u)$ and $g(v)$ cannot see each other, then $\widetilde d_\Sigma(g(u), g(v)) \ge 8\epsilon''/D_\Sigma$. Next we assume that $g(u)$ and $g(v)$ see each other. Because $g(u)\in \Sigma_1$ and $g(v)\in \Sigma_2$, 
\[
d_{tv} (P_{g(u)} - P_{g(v)}) \ge 1 - P_{g(u)}(S_2) - P_{g(v)}(S_1) \ge 1 - 2 \delta = 1 - \frac{\epsilon''}{8 e^4 D_\Sigma} \;.
\]
Lemma~\ref{lem:total-var}, applied to $g(u),g(v)\in\Sigma^{\epsilon''}$, gives us that 
\[
d_\Sigma(g(u), g(v)) \ge \frac{\epsilon''}{24 D_\Sigma} \qquad \mbox{or} \qquad \abs{g(u)-g(v)} \ge \frac{2}{\sqrt{n}} \min\left(\frac{2 F(g(u))}{\sqrt{n}}, G\epsilon'' \right) \;. 
\]
By \eqref{eq:Fu}, $F(g(u)) \ge \epsilon''/16$. We get the desired lower bound by taking a minimum over all cases. 
\end{proof}

\end{document}